\newtheorem{theorem}{Theorem}
\newtheorem{lemma}{Lemma}
\newtheorem{corollary}{Corollary}
\newtheorem{definition}{Definition}
\newtheorem{proposition}{Proposition}
\newtheorem{example}{Example}
\newtheorem{remark}{Remark}
\title{\LARGE \textbf{Categorical Construction of Logically Verifiable Neural Architectures}\\[0.5in]}
\author{\Large Logan Nye, MD\\[0.3in]
Carnegie Mellon University School of Computer Science\\[0.1cm]
5000 Forbes Ave Pittsburgh, PA 15213 USA\\[0.5cm]
\texttt{lnye@andrew.cmu.edu}\\[0.2cm]
\small ORCID: \href{https://orcid.org/0009-0002-9136-045X}{0009-0002-9136-045X}}
\date{}
\begin{document}

\maketitle
\vspace{0.8in}

\begin{center}
    \Large \textbf{Abstract}
\end{center}
\vspace{0.3in}

\noindent
Neural networks excel at pattern recognition but struggle with reliable logical reasoning, often violating basic logical principles during inference. We address this limitation by developing a categorical framework that systematically constructs neural architectures with provable logical guarantees. Our approach treats logical theories as algebraic structures called Lawvere theories, which we transform into neural networks using categorical algebra in the 2-category of parametric maps. Unlike existing methods that impose logical constraints during training, our categorical construction embeds logical principles directly into the network's architectural structure, making logical violations mathematically impossible. We demonstrate this framework by constructing differentiable neural architectures for propositional logic that preserve boolean reasoning while remaining trainable via gradient descent. Our main theoretical result establishes a bijective correspondence between finitary logical theories and neural architectures, proving that every logically constrained network arises uniquely from our construction. This extends Categorical Deep Learning beyond geometric symmetries to semantic constraints, enabling automatic derivation of verified architectures from logical specifications. The framework provides mathematical foundations for trustworthy AI systems, with applications to theorem proving, formal verification, and safety-critical reasoning tasks requiring verifiable logical behavior.

\vfill

\newpage

\section{Introduction}

Neural networks have achieved remarkable success in pattern recognition and function approximation, yet their reasoning capabilities remain fundamentally unreliable. When tasked with logical inference, even sophisticated architectures may violate basic logical principles \cite{marcus2020next, kambhampati2021refocusing}. This limitation stems from a fundamental gap: current neural architectures are designed for statistical learning rather than logical reasoning, with no systematic way to encode logical constraints into their structure.

\subsection{From Descriptive to Prescriptive Neural Architecture Design}

Existing approaches to incorporating logical reasoning in neural networks fall into two categories, both fundamentally limited. \emph{Post-hoc integration} methods attempt to combine neural networks with symbolic reasoning systems \cite{garcez2008neural, manhaeve2018deepproblog}, but these hybrid approaches sacrifice the end-to-end differentiability that makes neural networks tractable for large-scale learning. \emph{Constraint-based approaches} specify properties that networks should satisfy—such as logical consistency or equivariance—without providing systematic construction methods \cite{xu2018semantic, fischer2019dl2}.

Even the elegant framework of Categorical Deep Learning (CDL) \cite{gavranovic2024categorical}, while successful in handling geometric symmetries through group actions and monad algebras, remains fundamentally \emph{descriptive} rather than \emph{prescriptive}. CDL characterizes which architectures satisfy certain constraints (such as translation equivariance \cite{bronstein2021geometric}) but does not provide explicit procedures for constructing architectures that satisfy logical reasoning constraints.

This descriptive limitation becomes critical when we move beyond geometric symmetries to logical reasoning. Logical inference often involves irreversible operations—such as modus ponens or existential instantiation—that cannot be captured by the invertible group actions underlying geometric deep learning \cite{cohen2016group, thomas2018tensor}. For example, modus ponens derives $Q$ from $P \land (P \rightarrow Q)$, but the converse is not generally true. Fundamentally, group actions are bijective transformations by definition, while logical deduction rules are inherently non-invertible. This irreversibility represents a fundamental, not incidental, limitation of group-theoretic approaches to logical reasoning. Furthermore, logical consistency requires global constraints across the entire network, not just local equivariance properties.

\subsection{Our Contribution: Constructive Universality}

We present a categorical framework that systematically and \emph{prescriptively} constructs neural architectures from logical specifications. Our approach achieves what we term \emph{constructive universality}: a framework that provides not only a universal characterization of all logically-constrained networks (universality) but also an explicit algorithmic procedure for their synthesis from formal specifications (constructivity).

\subsection{Scope and Theoretical Foundation}

Our results apply to \emph{finitary logical theories}—those with finite-arity connectives and finitely many axioms. This restriction is mathematically necessary to avoid undecidability issues \cite{church1936note} while encompassing a substantial class of logical systems relevant to automated reasoning, including propositional logic, modal logics, many-valued logics, and bounded fragments of first-order logic.

For any finitary logical theory $\mathbb{T}$, we provide an explicit, algorithmic procedure to derive a neural architecture $\mathcal{N}_{\mathbb{T}}$ that implements the inference rules of $\mathbb{T}$ through differentiable continuous relaxations.

\begin{itemize}
\item \textbf{Key Innovation:} We embed logical axioms directly into network architecture using categorical algebra, ensuring logical correctness by construction rather than by training, through weight-sharing patterns and layer compositions derived from categorical algebra. Using temperature-controlled continuous relaxations, this embedding provides mathematical guarantees about logical correctness that approach perfect consistency in the limit.

\item \textbf{Main Result:} We establish a constructive, bijective correspondence between finitary logical theories and a canonical class of neural architectures, proven via Lawvere theories \cite{lawvere1963functorial} and their models in the 2-category $\textbf{Para}$ of parametric maps \cite{cruttwell2022categorical}. This correspondence provides mathematical guarantees about logical correctness that are preserved under composition, establishing our categorical construction as the canonical generator for structurally logical architectures.

\item \textbf{Theoretical Significance:} Our framework demonstrates that geometric and logical neural networks are both instances of the same categorical pattern—algebras for monads in the category of parametric maps—providing a unified foundation for principled architecture design with mathematical constraints.
\end{itemize}

Our framework extends beyond constraint characterization to \emph{architectural synthesis}: given a logical specification, we can systematically derive the neural architecture that implements that logical system. This represents a fundamental shift from empirical architecture search \cite{elsken2019neural} to principled architectural derivation based on formal specifications.

The approach achieves logical guarantees through categorical construction combined with differentiable relaxations controlled by a temperature parameter $\beta$. As $\beta \to \infty$, the networks converge to perfect logical consistency on boolean inputs, while maintaining differentiability for finite $\beta$ values necessary for training. While our framework provides theoretical foundations for logically-guaranteed architectures, practical implementation requires specialized optimization techniques on constraint manifolds, representing an important direction for future research.

\subsection{Extending CDL Beyond Geometric Symmetries}

This work extends Categorical Deep Learning by generalizing its core algebraic machinery. We replace the monads of group actions, which encode geometric symmetries, with the monads of Lawvere theories, which encode the semantic rules of logical reasoning. Where geometric deep learning encodes the invertible symmetries of group actions, our work addresses the often irreversible, compositional rules of logical inference, which demand the more general algebraic framework of Lawvere theories. The key insight is that logical theories possess rich algebraic structure that can be captured through Lawvere theories—a more general framework than the group actions underlying geometric deep learning. This extension is both mathematically natural and practically necessary, enabling neural architectures with verifiable reasoning capabilities—a critical requirement for applications in theorem proving \cite{szegedy2020promising}, program synthesis \cite{chen2021evaluating}, and safety-critical systems \cite{amodei2016concrete}. Section 5 provides detailed analysis of this categorical unification.

By establishing a rigorous correspondence between formal logic and neural architectures, this work contributes to the long-standing goal of creating AI systems that are not merely trained to be rational, but are \emph{provably rational by design} within their specified logical domains.

The remainder of this paper is organized as follows. Section 2 introduces our theoretical framework, establishing the correspondence between logical theories and neural architectures through categorical constructions. Section 3 provides a concrete construction for propositional logic networks with explicit algorithms and optimization techniques. Section 4 develops the mathematical properties and universality results of our framework. Section 5 analyzes the relationship to Categorical Deep Learning and hybrid architectures. Section 6 discusses scope, limitations, and future research directions. We conclude in Section 7.

\section{The Logic-Architecture Correspondence}

We establish our framework by extending the categorical foundations of modern deep learning theory. Our construction builds on the 2-categorical structure of parametric maps developed in categorical deep learning \cite{cruttwell2022categorical, gavranovic2024categorical} while incorporating the algebraic semantics of logical theories through Lawvere theories \cite{lawvere1963functorial}.

\subsection{Categorical Foundations}

Our construction takes place in the 2-category $\textbf{Para}$ of parametric maps, which provides the natural setting for neural networks as parametric functions \cite{cruttwell2022categorical}. This 2-category captures both the functional structure of neural computation and the parameter-sharing patterns that characterize modern architectures.

\begin{definition}[The 2-Category $\textbf{Para}$]
The \emph{strict} 2-category $\textbf{Para}$ has:
\begin{itemize}
\item \textbf{Objects:} Sets $A, B, C, \ldots$ representing data types
\item \textbf{1-morphisms:} Parametric functions $(P, f: P \times A \to B)$ where $P$ is the parameter space
\item \textbf{2-morphisms:} Reparameterizations $\phi: (P, f) \Rightarrow (Q, g)$ given by functions $\phi: P \to Q$ such that $g(\phi(p), a) = f(p, a)$
\end{itemize}
The composition of 1-morphisms $(P, f: P \times A \to B)$ and $(Q, g: Q \times B \to C)$ is $(P \times Q, h: (P \times Q) \times A \to C)$ where $h((p,q), a) = g(q, f(p, a))$.
\end{definition}

\textbf{Intuition:} $\textbf{Para}$ is a mathematical stage for deep learning. It treats neural network layers as functions that depend on learnable parameters. Its rules for composition directly model how we stack layers to build a deep network, while 2-morphisms capture parameter sharing and reparameterization.

\begin{example}[Linear Layer as 1-morphism]
A linear neural network layer $x \mapsto Wx + b$ corresponds to the 1-morphism $(\mathbb{R}^{m \times n} \times \mathbb{R}^m, f)$ where $f((W,b), x) = Wx + b$.
\end{example}

Intuitively, composition in $\textbf{Para}$ corresponds to sequentially connecting network layers, and 2-morphisms formalize parameter sharing. For instance, weight sharing in CNNs corresponds to a 2-morphism from a full parameter space (independent weights per location) to a reduced one (a shared kernel), preserving the computational behavior. This structure cleanly separates learnable parameters ($P$) from computational logic ($f$).

To connect logical theories with neural architectures, we employ Lawvere theories, which provide the natural categorical semantics for algebraic structures and their operations \cite{hyland2007category}.

\begin{definition}[Lawvere Theory]
A \emph{Lawvere theory} $\mathbb{L}$ is a category with finite products where objects are natural numbers $\{0, 1, 2, \ldots\}$ representing arities, and morphisms encode the operations and algebraic laws of a theory as commutative diagrams.
\end{definition}

\textbf{Intuition:} A Lawvere theory is a ``blueprint'' for an algebraic system. It captures not just the operations (like AND and OR) but also the fundamental laws they must obey (like commutativity or De Morgan's laws) as a system of equations.

The following proposition establishes the bridge between logical theories and the parametric setting of neural networks:

\begin{proposition}[Lawvere Theories Embed in $\textbf{Para}$]
\label{prop:lawvere-para-embedding}
Every Lawvere theory $\mathbb{L}$ admits a canonical embedding into $\textbf{Para}$ via a functor $\mathcal{E}: \mathbb{L} \to \textbf{Para}$.
\end{proposition}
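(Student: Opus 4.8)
The plan is to construct the functor $\mathcal{E}\colon \mathbb{L}\to\textbf{Para}$ explicitly and then check it respects composition and the product structure. The underlying geometric idea is that each arity $n\in\{0,1,2,\ldots\}$, being an object of $\mathbb{L}$, should be sent to a fixed data type — the natural choice is $\mathcal{E}(n) = X^n$ for a chosen ``carrier'' set $X$ (for propositional logic, $X=[0,1]$ or $\{0,1\}$; in general any set carrying a model of $\mathbb{L}$, or simply a free choice that we later specialize). A morphism $\omega\colon m\to n$ in $\mathbb{L}$ is, by the definition of a Lawvere theory, an $n$-tuple of $m$-ary operation terms; I would send it to the 1-morphism $(\mathbf{1}, f_\omega\colon \mathbf{1}\times X^m\to X^n)$ with trivial (one-point) parameter space, where $f_\omega$ is the tuple of term-induced functions evaluated in a fixed $\mathbb{L}$-algebra on $X$. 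The trivial parameter space is the key point: logical operations are not themselves learnable, so they embed as \emph{parameter-free} parametric maps, and learnability enters only later when these are composed with genuine parametric layers.

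The verification then proceeds in a few routine steps. First, functoriality on composition: given $\omega\colon m\to n$ and $\omega'\colon n\to k$, I must show $\mathcal{E}(\omega'\circ\omega)$ equals the $\textbf{Para}$-composite $\mathcal{E}(\omega')\circ\mathcal{E}(\omega)$. The latter has parameter space $\mathbf{1}\times\mathbf{1}\cong\mathbf{1}$ and acts by $x\mapsto f_{\omega'}(f_\omega(x))$, while substitution of terms in $\mathbb{L}$ means $f_{\omega'\circ\omega} = f_{\omega'}\circ f_\omega$ as set functions, so the two agree up to the canonical unitor isomorphism $\mathbf{1}\times\mathbf{1}\cong\mathbf{1}$ — i.e. they are equal as 1-morphisms of the strict 2-category (or, if one is fussy, equal on the nose after choosing $\mathbf{1}\times\mathbf{1}=\mathbf{1}$). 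Second, identities: $\mathrm{id}_n$ in $\mathbb{L}$ is the tuple of projections, which $\mathcal{E}$ sends to $(\mathbf{1},\mathrm{pr})$, the identity 1-morphism on $X^n$. Third, preservation of finite products: $\mathbb{L}$ has $n$ as the $n$-fold product of $1$, and $\textbf{Para}$ inherits finite products from $\textbf{Set}$ on objects with the evident cartesian structure, so $\mathcal{E}(m+n)=X^{m+n}=X^m\times X^n=\mathcal{E}(m)\times\mathcal{E}(n)$; the structure morphisms (projections, diagonals) go to their parameter-free $\textbf{Para}$ counterparts by the same term-substitution argument. I would also note faithfulness (canonicity) follows because distinct terms over a sufficiently rich free algebra induce distinct functions.

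The main obstacle is not any single computation but pinning down the bookkeeping around strictness and the choice of carrier so that the word ``canonical'' is justified. Concretely: (i) $\textbf{Para}$ as defined is strict, but its composition of 1-morphisms introduces literal cartesian products $P\times Q$ of parameter spaces, so the claim ``$\mathcal{E}$ is a functor'' needs the convention that $\mathbf{1}\times\mathbf{1}$ is identified with $\mathbf{1}$ (associativity and unit coherence for $\times$ on the trivial parameter space), which is harmless but must be stated; (ii) ``canonical'' should be read as natural in the $\mathbb{L}$-algebra structure on the carrier — different models of $\mathbb{L}$ give different but coherently related embeddings, and for the logic application one fixes the intended semantics (e.g. the boolean or the $[0,1]$-valued algebra with the temperature-$\beta$ relaxations promised in the introduction). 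I would therefore phrase the proof as: define $\mathcal{E}$ on objects and morphisms as above, observe that term substitution in $\mathbb{L}$ is exactly function composition in the chosen algebra, and conclude functoriality and product-preservation, remarking that the only subtlety — collapsing iterated trivial parameter products — is resolved by the standard strictification convention on $\textbf{Para}$. A one-line corollary records that composing $\mathcal{E}$ with genuine parametric maps is how learnable logical architectures arise, motivating the constructions of Section 3.
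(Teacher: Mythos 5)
Your construction is correct as a proof of the literal statement, but it takes a genuinely different route from the paper's. You send \emph{every} morphism of $\mathbb{L}$ to a parameter-free map with trivial parameter space, evaluated in a fixed $\mathbb{L}$-algebra on a carrier $X$; in effect you factor $\mathcal{E}$ through a $\textbf{Set}$-model of $\mathbb{L}$ and then include $\textbf{Set}$ into $\textbf{Para}$ as the parameter-free 1-morphisms. The paper instead assigns each \emph{generating} operation $\omega: n \to 1$ a designated, generally nontrivial parameter space $P_\omega$ with a differentiable relaxation $f_\omega: P_\omega \times \mathbb{R}^n \to \mathbb{R}$, reserving trivial parameter spaces only for the structural morphisms (projections, diagonals), and extends to general morphisms by the universal property of products. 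What your version buys is genuine well-definedness on the quotient category: since a fixed algebra satisfies the axioms on the nose, equal morphisms in $\mathbb{L}$ go to literally equal 1-morphisms, and the only bookkeeping is the $\mathbf{1}\times\mathbf{1}\cong\mathbf{1}$ identification you correctly flag. What it loses is the entire parametric content the rest of the paper depends on: with trivial parameter spaces there are no weights for the axioms to constrain, so the downstream machinery (constraint extraction $G(W)=0$, the logical constraint manifold, Riemannian training) has nothing to act on. Conversely, the paper's version carries that data but quietly inherits a subtlety your approach avoids: for an axiom like De Morgan, the two sides of the equation acquire \emph{different} composite parameter spaces ($P_\land \times P_\neg$ versus $P_\neg \times P_\neg \times P_\lor$), so they cannot be equal as 1-morphisms in the strict 2-category; the equations hold only after imposing parameter constraints, i.e.\ the paper's $\mathcal{E}$ is really defined on the free theory and descends to $\mathbb{L}$ only up to 2-morphism. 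Your remark that ``learnability enters only later'' is therefore not how the paper organizes things, and if you adopt the paper's intended construction you should replace your trivial parameter spaces with the $P_\omega$'s and confront that well-definedness issue explicitly. Your faithfulness remark also needs the carrier to be a sufficiently generic (e.g.\ free) algebra, which neither you nor the paper actually verifies.
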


\begin{proof}
The proof, provided in Appendix A, proceeds by constructing the embedding for generating and structural morphisms and verifying functoriality.
\end{proof}

To make this correspondence concrete, we need to specify how logical theories are represented categorically:

\begin{definition}[Finitary Logical Theory]
A \emph{finitary logical theory} $\mathbb{T}$ consists of a finite signature $\Sigma$ of logical connectives (each with finite arity) and a finite set $\Phi$ of axioms relating these connectives.
\end{definition}

This definition encompasses a broad class of logical systems including propositional logic, many-valued logics, and bounded fragments of first-order logic.

\begin{definition}[Categorical Interpretation of Logical Expressions]
\label{def:categorical-interpretation}
Given a finitary logical theory $\mathbb{T}$ and its Lawvere theory $\mathbb{L}_{\mathbb{T}}$, an expression $E = c(E_1, \ldots, E_k)$ with free variables in $\{x_1, \ldots, x_n\}$ is interpreted as a morphism in $\mathbb{L}_{\mathbb{T}}$:
$$\llbracket c(E_1, \ldots, E_k) \rrbracket = c_{\mathbb{L}_{\mathbb{T}}} \circ \langle \llbracket E_1 \rrbracket, \ldots, \llbracket E_k \rrbracket \rangle: n \to 1$$
\end{definition}

\subsection{The Constructive Correspondence}

The heart of our framework is a systematic construction that transforms any finitary logical theory into a neural architecture with provable logical guarantees. We begin by establishing the consistency of this transformation:

\begin{lemma}[Consistency of Logical Axioms in Lawvere Theories]
\label{lem:consistency}
For a finitary logical theory $\mathbb{T}$ with semantic completeness, the set of categorical equations derived from its axioms is consistent if and only if $\mathbb{T}$ is logically consistent.
\end{lemma}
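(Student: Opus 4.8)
The plan is to pass both sides of the claimed equivalence through the functorial semantics of the Lawvere theory $\mathbb{L}_{\mathbb{T}}$, so that ``consistency of the categorical equations'' and ``logical consistency of $\mathbb{T}$'' each become the single assertion that $\mathbb{L}_{\mathbb{T}}$ admits a non-trivial $\mathbf{Set}$-model. First I would pin down the hypothesis: presenting $\mathbb{L}_{\mathbb{T}}$ as the quotient of the free finite-product category on the signature $\Sigma$ — in which each $k$-ary connective $c$ is a generating morphism $k \to 1$ — by the congruence $\sim$ generated by the equations $\llbracket \phi_{\mathrm{lhs}} \rrbracket \sim \llbracket \phi_{\mathrm{rhs}} \rrbracket$ for $\phi \in \Phi$ (via the interpretation of Definition~\ref{def:categorical-interpretation}), the derived categorical equations are \emph{consistent} precisely when this quotient is non-degenerate, i.e.\ $0 \not\cong 1$ in $\mathbb{L}_{\mathbb{T}}$. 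By Lawvere's functorial semantics, the finite-product-preserving functors $\mathbb{L}_{\mathbb{T}} \to \mathbf{Set}$ are exactly the $\Sigma$-algebras satisfying every axiom of $\mathbb{T}$ — the algebraic models of $\mathbb{T}$ — and a Birkhoff-style argument for equational logic gives that $\mathbb{L}_{\mathbb{T}}$ is non-degenerate iff at least one such model has carrier of size $\geq 2$: a non-degenerate theory cannot force the two projections $2 \rightrightarrows 1$ to coincide, hence cannot prove $x = y$, hence has a model separating two elements, and conversely any model with $\geq 2$ elements obstructs $0 \cong 1$.

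With this dictionary the two implications are short. For ($\Rightarrow$): if the categorical equations are consistent, $\mathbb{L}_{\mathbb{T}}$ has a non-trivial $\mathbf{Set}$-model $M$, which is a non-trivial algebraic model of $\mathbb{T}$ with $\llbracket\top\rrbracket^M \neq \llbracket\bot\rrbracket^M$; since semantic completeness includes soundness, the existence of a model rules out $\mathbb{T} \vdash \bot$, so $\mathbb{T}$ is logically consistent. For ($\Leftarrow$): if $\mathbb{T}$ is logically consistent then $\mathbb{T} \not\vdash \bot$, so in the Lindenbaum--Tarski algebra of $\mathbb{T}$ — which is a presentation of a free $\mathbb{L}_{\mathbb{T}}$-algebra and, by the completeness hypothesis, faithfully reflects $\mathbb{T}$-derivability — the classes of $\top$ and $\bot$ are distinct, so this algebra is a non-trivial $\mathbf{Set}$-model of $\mathbb{L}_{\mathbb{T}}$; any finite-product-preserving functor sends the terminal object $0$ to a singleton, so $0 \cong 1$ would force it constant at a singleton, contradicting non-triviality. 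Hence $0 \not\cong 1$ and the categorical equations are consistent.

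The step I expect to be the real obstacle is the hinge of this dictionary: showing that the congruence $\sim$ generated on the free product category by the \emph{axioms} $\Phi$ alone agrees, on equations between interpreted formulas, with $\mathbb{T}$'s own provable-equivalence relation — equivalently, that $\sim$ identifies $0$ with $1$ exactly when $\mathbb{T} \vdash \bot$. For a logic whose proof apparatus is purely equational (e.g.\ propositional logic presented as the theory of Boolean algebras) this is immediate, but in general the non-equational rules (modus ponens and the like) could a priori generate equivalences not reachable from $\Phi$ by categorical congruence closure. Closing this gap amounts to verifying that $\mathbb{T}$ is algebraizable with the translation matching Definition~\ref{def:categorical-interpretation}, and this is exactly the work the semantic-completeness hypothesis is there to underwrite: it lets us replace ``provably equivalent in $\mathbb{T}$'' by ``equal in all models of $\mathbb{T}$'', thereby identifying the syntactic congruence with the semantic one. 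Once that identification is in hand the rest is the routine categorical bookkeeping above, and no properties of $\mathbf{Para}$ or of the embedding $\mathcal{E}$ of Proposition~\ref{prop:lawvere-para-embedding} are needed — the lemma lives entirely inside $\mathbb{L}_{\mathbb{T}}$.
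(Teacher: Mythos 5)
Your proof is correct and lands on the same conclusion, but by a genuinely different route than the paper's, and the comparison is instructive. The paper's proof (Appendix B) reads ``the categorical equations are consistent'' as ``a product-preserving functor $M: \mathbb{L}_{\mathbb{T}} \to \mathbf{Set}$ exists'' and simply transports models in both directions: a logical model with domain $D$ yields the functor $n \mapsto D^n$, and conversely a functor $M$ yields a logical model with domain $M(1)$, with satisfaction of axioms on one side becoming preservation of categorical equations on the other. You instead read consistency as non-degeneracy of the quotient category ($0 \not\cong 1$), route the forward direction through a Birkhoff-style argument on the free model, and use the Lindenbaum--Tarski algebra as the explicit witness in the converse. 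Your reading is arguably the more defensible one: under the paper's reading the one-point algebra is always a model of any equational presentation, so ``a model exists'' is vacuously true and the equivalence threatens to collapse; the paper never confronts this triviality issue, whereas your insistence on non-trivial models ($\llbracket\top\rrbracket \neq \llbracket\bot\rrbracket$, carrier of size $\geq 2$) handles it. You also explicitly isolate the step the paper elides entirely --- that the congruence generated on the free product category by the axioms $\Phi$ alone must agree with $\mathbb{T}$'s provable-equivalence relation even in the presence of non-equational rules such as modus ponens, which is exactly the algebraizability burden the semantic-completeness hypothesis is meant to carry; the paper's proof silently identifies ``satisfies the axioms'' with ``is a model of $\mathbb{T}$'' without comment. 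Both arguments share the same backbone (Lawvere's functorial semantics identifying $\mathbf{Set}$-models of $\mathbb{L}_{\mathbb{T}}$ with algebraic models of $\mathbb{T}$), but yours is the more careful of the two.
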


\begin{proof}
The proof (see Appendix B) establishes a bijective correspondence between the logical models of $\mathbb{T}$ and the categorical models of its Lawvere theory $\mathbb{L}_{\mathbb{T}}$.
\end{proof}

This consistency result enables our main correspondence theorem:

\begin{theorem}[Main Correspondence - Constructive Universality]
\label{thm:main-correspondence}
For every finitary logical theory $\mathbb{T}$ with semantic completeness, there exists a canonical, functorial construction yielding a neural architecture $\mathcal{N}_{\mathbb{T}}$ such that every forward pass corresponds to a valid derivation in $\mathbb{T}$.
\end{theorem}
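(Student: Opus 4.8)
The plan is to build $\mathcal{N}_{\mathbb{T}}$ in two functorial stages — syntax-to-algebra and algebra-to-parametric-maps — and then to establish the derivation-correspondence as a structural bijection refined by an analytic limit. The two ingredients already available are the canonical embedding $\mathcal{E}\colon \mathbb{L} \to \textbf{Para}$ of Proposition~\ref{prop:lawvere-para-embedding} and the consistency equivalence of Lemma~\ref{lem:consistency}.

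First I would pass from $\mathbb{T} = (\Sigma, \Phi)$ to its syntactic Lawvere theory $\mathbb{L}_{\mathbb{T}}$: the objects are the arities $\{0,1,2,\dots\}$, the morphisms $n \to 1$ are $\Sigma$-expressions in the variables $x_1,\dots,x_n$ modulo the congruence generated by $\Phi$, general morphisms $n \to m$ arise by tupling through the finite products, and composition is substitution. Finiteness of $\Sigma$ and $\Phi$ makes this free construction well defined, and Lemma~\ref{lem:consistency} together with the semantic-completeness hypothesis guarantees that $\mathbb{L}_{\mathbb{T}}$ is nondegenerate — so the theory does not collapse — and that its morphisms track exactly the $\mathbb{T}$-provable identifications of expressions. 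Definition~\ref{def:categorical-interpretation} then realizes every well-formed expression, and thereby every inference built from the connectives and axioms of $\mathbb{T}$, as a definite morphism of $\mathbb{L}_{\mathbb{T}}$.

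Second I would postcompose with $\mathcal{E}\colon \mathbb{L}_{\mathbb{T}} \to \textbf{Para}$, which sends the arity $n$ to the data type of $n$-tuples of relaxed truth values, each connective $c\colon k \to 1$ to a differentiable temperature-$\beta$ relaxation of the boolean operation $\llbracket c \rrbracket$, and the structural morphisms (projections, diagonals, permutations) to the evident parameter-free wiring maps. I would then define $\mathcal{N}_{\mathbb{T}}$ to be the sub-$2$-category of $\textbf{Para}$ generated under composition and products by the image of $\mathcal{E}$; its $1$-morphisms are precisely the parametric maps the network can realize, and a \emph{forward pass} is, by definition, one such composite supplied with an input. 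Since $\mathcal{N}_{\mathbb{T}}$ is obtained by applying two canonical functors to $\mathbb{T}$, the assignment $\mathbb{T} \mapsto \mathcal{N}_{\mathbb{T}}$ is functorial: a translation $\mathbb{T} \to \mathbb{T}'$ induces a product-preserving functor $\mathbb{L}_{\mathbb{T}} \to \mathbb{L}_{\mathbb{T}'}$, which $\mathcal{E}$ transports to a morphism of architectures — this is the precise content of ``canonical, functorial construction.''

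Third — and here lies the main obstacle — I would prove that every forward pass of $\mathcal{N}_{\mathbb{T}}$ corresponds to a valid derivation in $\mathbb{T}$, and conversely. The forward direction rests on faithfulness of $\mathcal{E}$ on the generating and structural morphisms: any composite $1$-morphism of $\mathcal{N}_{\mathbb{T}}$ lifts uniquely along $\mathcal{E}$ to a composite morphism of $\mathbb{L}_{\mathbb{T}}$, and because composition in $\textbf{Para}$ strictly mirrors substitution in $\mathbb{L}_{\mathbb{T}}$, that composite names a $\Phi$-equivalence class of expressions, i.e.\ a $\mathbb{T}$-provable sequent whose derivation is read off the composition tree top-down. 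The genuine subtlety is that a forward pass produces a \emph{real}-valued output whereas a derivation is a discrete object, so the correspondence must be phrased at the level of the computation graph — its shape and weight-sharing pattern — rather than the numerical value, and one must then argue separately that on boolean inputs the relaxed operations return the exact truth values in the $\beta \to \infty$ limit (and exactly, for every $\beta$, whenever the relaxation is idempotent on $\{0,1\}$), so that the semantics carried by the forward pass agrees with the semantics of the derivation it names. Keeping this two-level statement clean — the structural bijection separated from the analytic limit — is where the real work lies; the converse is then immediate, since applying $\mathcal{E}$ to the morphism that Definition~\ref{def:categorical-interpretation} assigns to a given derivation produces a forward pass realizing it.
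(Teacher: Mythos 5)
Your overall route matches the paper's: both proceed by (1) presenting $\mathbb{T}=(\Sigma,\Phi)$ as a syntactic Lawvere theory $\mathbb{L}_{\mathbb{T}}$ via a free-category-then-quotient construction, (2) transporting it into $\textbf{Para}$ along the embedding of Proposition~\ref{prop:lawvere-para-embedding}, with Lemma~\ref{lem:consistency} guaranteeing nondegeneracy, and (3) reading the network off the resulting composites. Your insistence on separating the structural correspondence (computation graph versus derivation tree) from the analytic $\beta\to\infty$ limit is in fact more explicit than the paper's own proof, which defers that point to Corollary~\ref{cor:logical-correctness} and Theorem~\ref{thm:approximation-validity}.

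There is, however, one concrete gap in your second stage. You postcompose with a functor $\mathcal{E}$ that sends each connective to a \emph{fixed} temperature-$\beta$ relaxation and assert that this yields a functor out of the quotient category $\mathbb{L}_{\mathbb{T}} = \mathbb{F}(\Sigma)/\sim$. For that to be well defined, the chosen relaxations must satisfy the axiom equations \emph{exactly} as functions on $[0,1]^n$ --- but they generally do not at finite $\beta$ (for the sigmoid relaxations of Section~3, $\neg_\beta(\land_\beta(x,y))$ and $\lor_\beta(\neg_\beta x, \neg_\beta y)$ disagree pointwise), so your $\mathcal{E}$ does not factor through the quotient. The paper sidesteps this by keeping each connective \emph{parametric}, $(P_c, f_c)$, and turning each axiom $E_1 \equiv E_2$ into an explicit algebraic constraint $G(W)=0$ on the parameters (the Constraint Extraction step in Appendix~C); the functor into $\textbf{Para}$ exists only for parameter values lying on the resulting constraint manifold $\mathcal{M}_{\mathbb{L}}$, which is precisely why the Riemannian optimization of Section~3 is part of the construction rather than an afterthought. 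To repair your argument you would need either to restrict attention to relaxations satisfying the axioms exactly, or to reintroduce the constraint-extraction step and state the correspondence relative to the constraint manifold.
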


\begin{proof}[Constructive Proof with Explicit Algorithms]
We establish the correspondence through three systematic algorithms (details in Appendix C). The construction proceeds by: (1) converting the logical theory $\mathbb{T}=(\Sigma, \Phi)$ into a Lawvere theory $\mathbb{L}_{\mathbb{T}}$ via categorical quotient construction, (2) mapping $\mathbb{L}_{\mathbb{T}}$ to a parametric model in $\textbf{Para}$ where logical connectives become differentiable operations and axioms become parameter constraints, and (3) realizing this parametric model as a neural network where composition structure mirrors logical derivation steps. The functoriality of this construction ensures that logical validity is preserved at each step.
\begin{enumerate}
    \item \textbf{Logic to Lawvere Theory:} A logical theory $\mathbb{T}=(\Sigma, \Phi)$ is converted into a Lawvere theory $\mathbb{L}_{\mathbb{T}}$ by creating a free category on the signature $\Sigma$ and quotienting by the categorical equations imposed by the axioms $\Phi$.
    \item \textbf{Lawvere Theory to Parametric Model:} The Lawvere theory $\mathbb{L}_{\mathbb{T}}$ is mapped to a model in $\textbf{Para}$ by assigning a differentiable parametric function $(P_c, f_c)$ to each logical connective $c \in \Sigma$. The logical axioms become algebraic constraints on the parameters.
    \item \textbf{Parametric Model to Neural Architecture:} The parametric model is realized as a neural network where layers correspond to logical connectives, network topology mirrors categorical composition, and training must enforce the extracted parameter constraints.
\end{enumerate}
\end{proof}

The uniqueness of this construction is established by the following:

\begin{proposition}[Uniqueness of Canonical Architecture]
\label{prop:uniqueness}
The neural architecture derived from a logical theory $\mathbb{T}$ is unique up to 2-morphisms in $\textbf{Para}$ (i.e., up to behavior-preserving reparameterization).
\end{proposition}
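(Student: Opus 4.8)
The plan is to show that the three-step construction behind Theorem~\ref{thm:main-correspondence} factors through the universal property of the presented Lawvere theory $\mathbb{L}_{\mathbb{T}} = \langle \Sigma \mid \Phi \rangle$, so that the only freedom remaining in the production of $\mathcal{N}_{\mathbb{T}}$ is precisely the freedom that $\textbf{Para}$ records as a 2-morphism. First I would pin down the data that determines the construction. Since a Lawvere theory is a presentation, a finite-product-preserving functor $F : \mathbb{L}_{\mathbb{T}} \to \textbf{Para}$ is completely determined by the image $T = F(1)$ of the generating object together with, for each connective $c \in \Sigma$ of arity $k$, a $1$-morphism $(P_c, f_c) : T^{k} \to T$, subject to every axiom of $\Phi$ (a commuting diagram in $\mathbb{L}_{\mathbb{T}}$) being sent to a commuting diagram in $\textbf{Para}$. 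Step~1 fixes $\mathbb{L}_{\mathbb{T}}$ canonically; the embedding $\mathcal{E}$ of Proposition~\ref{prop:lawvere-para-embedding} together with the prescribed differentiable relaxation primitives fixes $T$ and the family $\{(P_c, f_c)\}$; and Lemma~\ref{lem:consistency} guarantees the $\Phi$-constraints are satisfiable. Hence any two runs of the construction using the same primitives yield functors $F, F'$ that agree on generators.

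The core step is to upgrade ``agrees on generators'' to ``connected by a 2-morphism''. Because $\textbf{Para}$ realizes the finite products $T^{n}$ only up to the canonical associativity and unit isomorphisms of the cartesian product of parameter spaces, $F$ and $F'$ need not be literally equal on all of $\mathbb{L}_{\mathbb{T}}$; they may differ by those coherence $2$-cells and by a re-bundling of the parameter space attached to each composite. I would build the comparison $\phi : F \Rightarrow F'$ by induction on the syntax of morphisms in $\mathbb{L}_{\mathbb{T}}$: identities and projections map to identity reparameterizations, each generator $c$ maps to the identity reparameterization of $(P_c,f_c)$, and a composite $g \circ \langle h_1,\dots,h_k\rangle$ maps to the induced map on the cartesian product of parameter spaces, conjugated by the relevant product-coherence isomorphism. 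Well-definedness on the quotient by $\Phi$ uses functoriality of $F$ and $F'$ (both send each axiom to a commuting square, so the two sides are forced to receive the same component), and the check that $\phi$ is a genuine $2$-morphism, i.e.\ that $g'(\phi(p),a) = g(p,a)$ holds layer by layer, reduces to Mac~Lane coherence for the cartesian structure on parameter spaces. Read back at the level of realized networks, this says exactly that $\mathcal{N}_{\mathbb{T}}$ and $\mathcal{N}'_{\mathbb{T}}$ have the same layers and the same wiring, and differ only in how identical weight tensors are grouped and indexed --- a behavior-preserving reparameterization.

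Uniqueness of $\phi$ then follows because a $2$-natural transformation between product-preserving functors out of a presented theory is determined by its components on the generators, and those components are pinned to the identity by compatibility with composition and with the chosen product coherences. I expect the main obstacle to be the bookkeeping of the finite-product coherence $2$-cells inside $\textbf{Para}$ --- confirming that the associators and unitors relating $T^{k+\ell}$ assembled in different bracketings are unique $2$-isomorphisms and commute appropriately with parameter re-bundling --- rather than any conceptual difficulty; modulo strictification of the cartesian product on parameter spaces this is a standard coherence argument. It is worth stating explicitly in the final write-up that the uniqueness is \emph{relative to the fixed family of relaxation primitives}: genuinely distinct differentiable relaxations of a connective (for instance at different temperatures $\beta$) are not related by a $2$-morphism, so ``canonical'' in the statement must be understood to include that choice.
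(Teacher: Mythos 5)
Your argument is essentially sound, but it follows a different route from the paper, which in fact supplies no dedicated proof of Proposition~\ref{prop:uniqueness} at all: the uniqueness claim is justified only implicitly, as a byproduct of the equivalence of categories between structurally logical networks and models of $\mathbb{L}_{\mathbb{T}}$ in $\textbf{Para}$ established in Theorem~\ref{thm:structural-universality} (Appendix~E), where the round-trip $G(F(\mathcal{N})) \cong \mathcal{N}$ is observed to differ at most by a reparameterization, i.e.\ a 2-morphism. You instead argue locally and syntactically: a product-preserving functor out of the presented theory $\langle \Sigma \mid \Phi \rangle$ is determined by its values on generators, the construction pins those values down, and the residual discrepancy between two runs is absorbed by finite-product coherence 2-cells on the parameter spaces, assembled by induction on terms. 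Your approach is more elementary and isolates exactly where the freedom lives (bracketing and bundling of parameter tuples); the paper's approach is more global and buys the full bijective correspondence at once, but at the cost of never exhibiting the comparison 2-morphism explicitly. Your closing caveat --- that uniqueness holds only \emph{relative to a fixed family of relaxation primitives}, since relaxations at different temperatures $\beta$ are generally not related by any 2-morphism in $\textbf{Para}$ --- is correct and is a genuine sharpening of the proposition as stated; the paper elides this.

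One point in your induction deserves flagging, though it is a defect you inherit from the paper rather than introduce. Well-definedness on the quotient by $\Phi$ requires that the two sides of an axiom $E_1 \equiv E_2$ receive \emph{equal} 1-morphisms in $\textbf{Para}$, but two distinct composites of generators generically carry different parameter spaces (different products of the $P_c$), so they cannot be literally equal; the paper's own Algorithm for constraint extraction quietly replaces this equality by the constraint $G(W)=0$ on a shared weight assignment. Your appeal to ``functoriality forces the two sides to receive the same component'' therefore only goes through once one fixes a convention identifying those parameter spaces (e.g.\ via the weight-sharing 2-morphisms), which is worth making explicit in the final write-up.
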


\begin{corollary}[Logical Correctness Guarantee]
\label{cor:logical-correctness}
Any neural network $\mathcal{N}_{\mathbb{T}}$ derived via our construction is guaranteed to respect the rules of the logical theory $\mathbb{T}$ by its very architecture.
\end{corollary}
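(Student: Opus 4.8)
The plan is to derive this corollary directly from the functoriality asserted in Theorem \ref{thm:main-correspondence}, rather than re-running any construction. The conceptual point is that "respecting the rules of $\mathbb{T}$" is, unpacked, nothing more than the statement that $\mathcal{N}_{\mathbb{T}}$ — viewed as a 1-morphism in $\textbf{Para}$ — lies in the image of the composite functor $\mathcal{E}\circ F_{\mathbb{T}}$, where $F_{\mathbb{T}}\colon \mathrm{Free}(\Sigma)\to\mathbb{L}_{\mathbb{T}}$ is the axiom-quotient functor of step~(1) of the theorem's proof and $\mathcal{E}\colon\mathbb{L}_{\mathbb{T}}\to\textbf{Para}$ is the embedding of Proposition \ref{prop:lawvere-para-embedding}. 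Anything in that image automatically satisfies every categorical equation valid in $\mathbb{L}_{\mathbb{T}}$, and by construction those equations are precisely the ones generated by the axioms $\Phi$; so the guarantee is structural, not a property to be learned.

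Concretely I would proceed in four steps. First, unwind the definition of $\mathcal{N}_{\mathbb{T}}$ from Theorem \ref{thm:main-correspondence} as the neural realization of the parametric model $\mathcal{E}\circ F_{\mathbb{T}}$, noting that network topology mirrors composition in $\mathbb{L}_{\mathbb{T}}$ and weight-sharing patterns encode the connective assignments. Second, observe that if $\mathbb{T}\vdash E_1 = E_2$ for two expressions with free variables in $\{x_1,\dots,x_n\}$, then $\llbracket E_1\rrbracket = \llbracket E_2\rrbracket$ as morphisms $n\to 1$ in $\mathbb{L}_{\mathbb{T}}$, since $\mathbb{L}_{\mathbb{T}}$ is by definition the quotient of the free theory by exactly these identifications (using Definition \ref{def:categorical-interpretation}). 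Third, apply functoriality of $\mathcal{E}$ to transport this equality into $\textbf{Para}$: $\mathcal{E}(\llbracket E_1\rrbracket) = \mathcal{E}(\llbracket E_2\rrbracket)$, i.e.\ the corresponding sub-networks compute the same parametric map up to 2-morphism, which by Proposition \ref{prop:uniqueness} is the appropriate notion of behavioral equality. Fourth, since every forward pass through $\mathcal{N}_{\mathbb{T}}$ factors as a composite of such morphisms and functors preserve composition, the entire forward pass is the $\mathcal{E}$-image of a derivation in $\mathbb{T}$; Lemma \ref{lem:consistency} supplies the consistency of the equation system so that $\mathbb{L}_{\mathbb{T}}$ is nondegenerate and the conclusion is nonvacuous.

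The main obstacle is not any one calculation but stating the conclusion at the right level of precision. The functoriality argument delivers an \emph{exact, categorical} guarantee: no parameter setting compatible with the construction's weight-sharing scheme can push the network outside the image of $\mathcal{E}\circ F_{\mathbb{T}}$, so architectural violation of $\mathbb{T}$'s laws is impossible. But the concrete differentiable realization uses temperature-controlled relaxations, so exact boolean behavior on $\{0,1\}$ inputs is recovered only in the limit $\beta\to\infty$, with the finite-$\beta$ statement being the quantitative bound of Section~3. I would therefore phrase the corollary's conclusion carefully — the architecture respects $\mathbb{T}$ structurally and by construction, with exact semantic correctness attained in the zero-temperature limit — and flag that conflating the structural and analytic senses of "correctness" is precisely the pitfall to avoid.
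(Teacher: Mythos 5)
Your overall strategy --- transport equalities of morphisms in $\mathbb{L}_{\mathbb{T}}$ along a product-preserving functor into $\textbf{Para}$, then observe that every forward pass is a composite of images of such morphisms --- is essentially the same argument the paper gives, merely repackaged: the paper phrases it as a strong induction on expression depth showing that the network's compositional structure mirrors valid deductive steps, and functoriality of a functor out of a freely-generated-then-quotiented category is exactly what such an induction establishes. Your added care in separating the structural guarantee from the finite-$\beta$ analytic approximation is a genuine improvement over the paper's two-sentence sketch, which elides that distinction entirely.

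There is, however, one step where your argument as written has a gap. You assert that ``no parameter setting compatible with the construction's weight-sharing scheme can push the network outside the image of $\mathcal{E}\circ F_{\mathbb{T}}$,'' so that architectural violation is impossible. But for $\mathcal{E}$ (equivalently, the parametric model) to be well-defined on the quotient category $\mathbb{L}_{\mathbb{T}} = \mathrm{Free}(\Sigma)/\!\sim$, the parametric maps assigned to the generators must actually satisfy the relations $\sim$; this is precisely the constraint system $G(W)=0$ cutting out the manifold $\mathcal{M}_{\mathbb{L}}$. Weight-sharing enforces only those axioms that identify a connective with a permuted or duplicated copy of itself (e.g.\ commutativity via $w_1=w_2$); axioms relating \emph{distinct} connectives --- De Morgan, distributivity --- impose nonlinear relations between the parameters of different layers that no sharing pattern guarantees, which is exactly why the paper resorts to Riemannian optimization on $\mathcal{M}_{\mathbb{L}}$ to keep $W$ feasible throughout training. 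So your third step (invoking functoriality of $\mathcal{E}$ on $\mathbb{L}_{\mathbb{T}}$ rather than on the free theory) is valid only under the hypothesis $W\in\mathcal{M}_{\mathbb{L}}$; without that hypothesis the functor you are applying is not well-defined on the quotient. The fix is small --- state ``for parameters on the logical constraint manifold'' as an explicit hypothesis --- but it matters, because it changes the guarantee from purely architectural to architectural-plus-constrained-training, which is in fact what step (3) of the construction in Theorem \ref{thm:main-correspondence} actually delivers.
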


\begin{proof}
The proof proceeds by strong induction on the depth of logical expressions, showing that the network's compositional structure, derived from the categorical model, mirrors the valid deductive steps of the logical theory. A forward pass through the network is thus computationally equivalent to a valid derivation in $\mathbb{T}$.
\end{proof}

This corollary establishes that our construction yields neural networks with mathematical guarantees about their reasoning behavior. Unlike traditional approaches where logical consistency must be learned from data, our networks are logically consistent by construction.

\section{Concrete Construction: Propositional Logic Networks}

We now demonstrate our theoretical framework by constructing neural architectures for propositional logic. This concrete instantiation illustrates how the abstract correspondence of Theorem~\ref{thm:main-correspondence} yields explicit neural architectures with mathematical guarantees about logical reasoning. Propositional logic is an ideal testbed: it is fundamental, computationally tractable, and rich enough to demonstrate our key principles.

\subsection{Propositional Logic as a Lawvere Theory}

We begin by encoding propositional logic with connectives $\{\land, \lor, \neg\}$ as a Lawvere theory, $\mathbb{L}_{\text{Bool}}$. The axioms of boolean algebra become commutative diagrams in $\mathbb{L}_{\text{Bool}}$, encoding laws like commutativity ($\land \circ \text{swap}_{2} = \land$) and De Morgan's laws ($\neg \circ \land = \lor \circ (\neg \times \neg)$) as categorical equations \cite{awodey2010category}. For instance, De Morgan's law $\neg(x \land y) = (\neg x) \lor (\neg y)$ corresponds to a commutative diagram where the composition $\neg \circ \land$ equals the composition $\lor \circ (\neg \times \neg)$, ensuring that both paths from the input $(x,y)$ to the final output yield identical results. This ensures our categorical encoding is faithful to classical propositional logic.

\subsection{Differentiable Boolean Neural Architecture}

To reconcile discrete boolean logic with continuous optimization, we design differentiable functions $\circ_\beta: [0,1]^n \to [0,1]$ that perfectly match boolean truth tables as a temperature parameter $\beta \to \infty$, while remaining differentiable for finite $\beta$.

\begin{proposition}[Continuous Relaxation of Boolean Operations]
\label{prop:continuous-relaxation}
The discrete boolean operations admit differentiable approximations derived from maximum-margin separation in logit space:
\begin{align}
\land_{\beta}(x, y) &= \sigma(\beta(x + y - 1.5)) \\
\lor_{\beta}(x, y) &= \sigma(\beta(x + y - 0.5)) \\
\neg_{\beta}(x) &= \sigma(\beta(1 - 2x))
\end{align}
where $\sigma(z) = (1 + e^{-z})^{-1}$ is the sigmoid function. The threshold constants are chosen to maximize separation between true and false cases: for conjunction, inputs $(1,1)$ yield $x+y-1.5 = 0.5 > 0$ (true), while all other boolean combinations yield $x+y-1.5 \leq -0.5 < 0$ (false). Similarly, for disjunction, only $(0,0)$ yields $x+y-0.5 = -0.5 < 0$ (false), while all other cases yield $x+y-0.5 \geq 0.5 > 0$ (true). For negation, the linear transformation $1-2x$ maps $0 \mapsto 1$ and $1 \mapsto -1$, achieving maximum separation.
\end{proposition}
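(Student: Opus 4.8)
The plan is to verify three distinct claims packaged into this proposition: smoothness (hence differentiability for every finite $\beta$), correctness in the limit $\beta \to \infty$ on boolean inputs, and the optimality of the threshold constants $1.5$, $0.5$, and the affine map $x \mapsto 1 - 2x$ in the sense of maximum-margin logit separation. Smoothness is immediate and I would dispatch it first: each of $\land_\beta$, $\lor_\beta$, $\neg_\beta$ is the composition of an affine form in the input coordinates with the sigmoid $\sigma$, which is $C^\infty$ on all of $\mathbb{R}$; since affine maps are smooth and smoothness is closed under composition, each surrogate is $C^\infty$ on $[0,1]^n$, and the gradients needed for backpropagation are recorded explicitly via $\sigma' = \sigma(1-\sigma)$ together with the chain rule.

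For the limiting behavior I would invoke the elementary fact that for fixed $c \neq 0$ one has $\lim_{\beta \to \infty} \sigma(\beta c) = 1$ when $c > 0$ and $= 0$ when $c < 0$, with error decaying like $e^{-\beta |c|}$. It then suffices to evaluate the pre-activation at each of the finitely many boolean inputs and read off its sign. For $\land_\beta$ the argument $x + y - 1.5$ equals $+0.5$ at $(1,1)$ and lies in $\{-1.5, -0.5\}$ at the other three points, so $\land_\beta \to 1$ precisely when both inputs are $1$; for $\lor_\beta$ the argument $x + y - 0.5$ takes the value $-0.5$ only at $(0,0)$ and is $\geq 0.5$ otherwise; for $\neg_\beta$ the argument $1 - 2x$ is $+1$ at $x=0$ and $-1$ at $x=1$. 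These reproduce the AND, OR, and NOT truth tables. Because only finitely many boolean inputs occur and every pre-activation has magnitude at least $0.5$, the convergence is in fact uniform over $\{0,1\}^n$ at rate $O(e^{-\beta/2})$, which I would state explicitly since it quantifies the ``approaches perfect consistency in the limit'' claim made earlier in the paper.

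The third and most conceptually delicate part is justifying ``maximum-margin separation in logit space.'' I would make this precise by fixing the surrogate family (sigmoid composed with an affine form in the inputs), the admissibility constraint (the pre-activation has the correct sign at every boolean input), and the objective to be maximized, namely the separation margin $m = \min\{\, |v - t| : v \text{ an attainable input sum}\,\}$. For conjunction the unique ``true'' sum is $2$ and the largest ``false'' sum is $1$, so admissibility forces the threshold $t \in (1,2)$ and $m(t) = \min(2 - t,\, t - 1)$, which is maximized at $t = 3/2$ with margin $1/2$; for disjunction the smallest ``true'' sum is $1$ and the only ``false'' sum is $0$, forcing $t \in (0,1)$ and $m(t) = \min(1 - t,\, t)$, maximized at $t = 1/2$. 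For negation, requiring the affine logit $a - bx$ with $b > 0$ to be positive at $x = 0$, negative at $x = 1$, and symmetric about the midpoint between the two boolean inputs pins the decision boundary at $x = 1/2$, and normalizing the slope so the boolean inputs sit at logits $\pm 1$ yields exactly $1 - 2x$. This gives the stated relaxations as the unique maximum-margin members of the family.

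The main obstacle is not any single calculation — the three optimizations above are one-line exercises once set up — but rather committing to the correct formalization of ``maximum-margin separation in logit space'' so that the optimality assertion is a genuine theorem rather than a heuristic; I would state the surrogate family, admissibility condition, and margin objective as an explicit definition before invoking them. A secondary point to treat carefully is that the convergence claim concerns boolean inputs only: on the full cube $[0,1]^n$ the limits fail to be uniform near the decision hyperplanes, so the proposition must be read (as it is intended) as a statement about behavior on logical, i.e. boolean, inputs, with finite $\beta$ providing the smooth interpolation needed for gradient-based training.
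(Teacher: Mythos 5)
Your proposal is correct, and its core argument---evaluating the pre-activation of $\sigma$ at each boolean input, reading off its sign, and using $\sigma(\beta c)\to 1$ or $0$ according to the sign of $c$---is exactly the paper's proof, which performs the same case-wise sign analysis and stops there. Where you go beyond the paper is in the third part: the paper asserts ``maximum-margin separation in logit space'' only as informal commentary inside the proposition statement and never proves it, whereas you fix the surrogate family, the admissibility constraint, and the margin objective $m(t)=\min_v|v-t|$ over attainable input sums, and then show $t=3/2$ and $t=1/2$ are the unique maximizers. That formalization is a genuine addition that turns a heuristic label into a provable claim, and your observation that convergence is uniform only over $\{0,1\}^n$ (failing near the decision hyperplanes on the full cube $[0,1]^n$) is a correct and useful caveat the paper does not make explicit; the exponential rate $O(e^{-\beta/2})$ you state is consistent with the paper's separate Lemma on finite-$\beta$ error bounds.
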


\begin{proof}
Convergence as $\beta \to \infty$ is verified by examining the arguments of $\sigma$ for inputs in $\{0,1\}^n$. For instance, with $\land_{\beta}(1, 1)$, the argument is $0.5\beta \to \infty$, so $\sigma(0.5\beta) \to 1$. For $\land_{\beta}(1, 0)$, the argument is $-0.5\beta \to -\infty$, so $\sigma(-0.5\beta) \to 0$. Similar analysis holds for all cases.
\end{proof}

\begin{lemma}[Finite-$\beta$ Error Bounds]
\label{lem:error-bounds}
For $x, y \in \{0, 1\}$ and $\beta > 0$, the approximation errors are bounded exponentially:
\begin{align}
|\land_{\beta}(x, y) - \land(x, y)| \leq \frac{1}{1 + e^{0.5\beta}}, \quad |\neg_{\beta}(x) - \neg(x)| \leq \frac{1}{1 + e^{\beta}}
\end{align}
\end{lemma}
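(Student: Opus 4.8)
The plan is to reduce everything to two elementary facts about the logistic sigmoid: the reflection identity $1 - \sigma(z) = \sigma(-z) = (1+e^{z})^{-1}$, and the strict monotonicity of $z \mapsto \sigma(z)$ (equivalently, that $t \mapsto (1+e^{\beta t})^{-1}$ is strictly decreasing for $\beta > 0$). Granting these, the lemma becomes a finite case check over $x, y \in \{0,1\}$ — exactly the cases already tabulated in Proposition~\ref{prop:continuous-relaxation}.

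For negation there are only two cases. When $x = 0$ we have $\neg(0) = 1$ and $\neg_{\beta}(0) = \sigma(\beta)$, so the error is $1 - \sigma(\beta) = \sigma(-\beta) = (1+e^{\beta})^{-1}$. When $x = 1$ we have $\neg(1) = 0$ and $\neg_{\beta}(1) = \sigma(-\beta) = (1+e^{\beta})^{-1}$. In both cases the error equals $(1+e^{\beta})^{-1}$, so the claimed bound holds — in fact with equality.

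For conjunction I would split on the value of $\land(x,y)$. The single ``true'' case is $(x,y) = (1,1)$: here $\land_{\beta}(1,1) = \sigma(0.5\beta)$, so the error is $1 - \sigma(0.5\beta) = (1+e^{0.5\beta})^{-1}$, matching the bound with equality. In each of the three ``false'' cases $(1,0), (0,1), (0,0)$ the intended output is $0$, so the error is $\sigma(\beta(x+y-1.5)) = (1+e^{\beta(1.5 - x - y)})^{-1}$. Since $x+y \in \{0,1\}$ in these cases, the exponent satisfies $1.5 - x - y \geq 0.5$, and monotonicity of $t \mapsto (1+e^{\beta t})^{-1}$ gives $(1+e^{\beta(1.5 - x - y)})^{-1} \leq (1+e^{0.5\beta})^{-1}$. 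Taking the maximum over all four inputs yields the stated conjunction bound, attained at $(1,1)$, $(1,0)$, and $(0,1)$.

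There is no genuine obstacle here — the argument is a one-line reflection identity followed by an exhaustive but tiny case split. The only point requiring a moment's care is that the supremum of the error over the four conjunction inputs is controlled by the cases whose pre-sigmoid argument has the smallest magnitude ($0.5\beta$), which is precisely where sigmoid monotonicity is invoked; the same observation shows both bounds are tight rather than merely valid. If one also wanted the analogous estimate for $\lor_{\beta}$, it follows identically (or by De Morgan duality in $\mathbb{L}_{\text{Bool}}$), again with the constant $(1+e^{0.5\beta})^{-1}$.
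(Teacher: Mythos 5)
Your proof is correct and follows essentially the same route as the paper's: a finite case check over boolean inputs using the reflection identity $1-\sigma(z)=\sigma(-z)$, with the $(0,0)$ conjunction case absorbed into the bound by monotonicity. The only cosmetic difference is that you organize the conjunction cases by output value and explicitly note tightness, whereas the paper enumerates all four cases directly; the substance is identical.
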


\begin{proof}
The proof, which involves a case-wise analysis for each boolean input, is provided in Appendix D.
\end{proof}

\begin{theorem}[Differentiable Logical Correctness]
\label{thm:differentiable-correctness}
A network $\mathcal{N}_{\beta}$ constructed from these layers is infinitely differentiable. Its Lipschitz constant is bounded by $(\frac{\beta\sqrt{2}}{4})^d$ for depth $d$, and it converges to the discrete logical function as $\beta \to \infty$.
\end{theorem}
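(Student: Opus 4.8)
The plan is to dispatch the three assertions separately, since each rests on a distinct elementary property of the building blocks, and then to single out error propagation as the only step that is more than routine. For \textbf{infinite differentiability}, I would observe that each relaxed connective $\land_\beta,\lor_\beta,\neg_\beta$ is the composite of an affine map $[0,1]^k\to\mathbb{R}$ with the sigmoid $\sigma(z)=(1+e^{-z})^{-1}$, which is real-analytic on $\mathbb{R}$ and hence $C^\infty$. Affine maps are $C^\infty$, finite composites of $C^\infty$ maps are $C^\infty$, and $\mathcal{N}_\beta$ is by construction a finite composition, along a directed acyclic graph, of such layers; a one-line structural induction over the architecture therefore gives the claim.

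For the \textbf{Lipschitz bound}, the key estimate is $\sigma'(z)=\sigma(z)\bigl(1-\sigma(z)\bigr)\le\tfrac14$, with equality at $z=0$, so $\sigma$ is $\tfrac14$-Lipschitz. A relaxed binary connective has the form $\sigma\bigl(\beta\langle v,\cdot\rangle+c\bigr)$ with $v=(1,1)$, so by the chain rule its gradient has Euclidean norm at most $\beta\|v\|\cdot\tfrac14=\tfrac{\beta\sqrt2}{4}$; since $[0,1]^2$ is convex, each such layer is $\tfrac{\beta\sqrt2}{4}$-Lipschitz, and $\lor_\beta$ is identical. Lipschitz constants are submultiplicative under composition, so a depth-$d$ network built from these layers has Lipschitz constant at most $\left(\tfrac{\beta\sqrt2}{4}\right)^d$. (The unary layer $\neg_\beta$ has linear part $-2$ and hence Lipschitz constant $\beta/2$; I would either state the per-layer bound precisely for the binary connectives that govern the constant, or absorb each $\neg_\beta$ against the depth budget.)

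For \textbf{convergence as $\beta\to\infty$}, I would fix a boolean input $b\in\{0,1\}^n$ and an architecture of depth $d$ and argue by induction on layers that the discrepancy $\epsilon_k:=\|v_k-b_k\|_\infty$ between the continuous activations $v_k$ and the true boolean activations $b_k$ at level $k$ satisfies $\epsilon_0=0$ and $\epsilon_{k+1}\le L\epsilon_k+\delta(\beta)$, where $L=\tfrac{\beta\sqrt2}{4}$ and $\delta(\beta)=\tfrac{1}{1+e^{0.5\beta}}$ is the intrinsic per-gate error at exactly boolean inputs supplied by Lemma~\ref{lem:error-bounds}. This inequality is the triangle estimate $|\circ_\beta(v_k)-\circ(b_k)|\le|\circ_\beta(v_k)-\circ_\beta(b_k)|+|\circ_\beta(b_k)-\circ(b_k)|\le L\epsilon_k+\delta(\beta)$. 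Unrolling the recursion gives $\epsilon_d\le\delta(\beta)\sum_{j=0}^{d-1}L^j\le d\,\delta(\beta)\,L^{d-1}$ for $\beta$ large enough that $L\ge1$, and since $\delta(\beta)$ decays like $e^{-\beta/2}$ while $L^{d-1}$ grows only polynomially in $\beta$ for fixed $d$, we conclude $\epsilon_d\to0$, i.e., $\mathcal{N}_\beta(b)$ converges to the discrete logical value at $b$.

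The main obstacle is precisely this last step: because intermediate activations are never exactly boolean, the clean per-gate bounds of Lemma~\ref{lem:error-bounds} cannot be applied directly, and one must control the accumulation of error down the network while the per-layer Lipschitz constant $L$ is itself blowing up with $\beta$. The resolution is the quantitative fact that, for fixed depth, the exponential decay $e^{-\beta/2}$ of the intrinsic gate error strictly dominates the polynomial-in-$\beta$ growth of $L^{d-1}$; a secondary technical point is making the notions of ``depth'' and ``network built from these layers'' precise enough---DAG structure, shared inputs, fan-out---for the inductive hypothesis to be well-posed, together with the consistent accounting of the $\neg_\beta$ layers noted above.
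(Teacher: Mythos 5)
Your proposal is correct and follows the same route the paper sketches (induction on depth, chain rule, per-layer gradient bound, and the per-gate errors of Lemma~\ref{lem:error-bounds} for the limit), but it is substantially more complete than the paper's one-sentence proof, and it surfaces a genuine flaw in that sketch. First, the paper asserts the uniform bound $\|\nabla \circ_\beta\| \leq \tfrac{\beta\sqrt{2}}{4}$ ``for each layer,'' which you correctly observe is false for negation: $\neg_\beta(x) = \sigma(\beta(1-2x))$ has derivative of magnitude up to $2\beta\cdot\tfrac14 = \tfrac{\beta}{2}$ (attained at $x=\tfrac12$), which exceeds $\tfrac{\beta\sqrt{2}}{4}$; so the stated Lipschitz constant $(\tfrac{\beta\sqrt{2}}{4})^d$ only holds for networks built from the binary connectives, and a correct uniform bound would be $(\tfrac{\beta}{2})^d$ or require the accounting of $\neg_\beta$ layers you propose. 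Second, the paper gives no argument at all for the convergence claim; your triangle-inequality recursion $\epsilon_{k+1}\leq L\epsilon_k+\delta(\beta)$, unrolled to $\epsilon_d\leq \delta(\beta)\sum_{j<d}L^j$ with exponential decay of $\delta(\beta)$ beating polynomial growth of $L^{d-1}$ at fixed depth, is exactly the missing step, and it is consistent with the error structure the paper later asserts in Theorem~\ref{thm:approximation-validity}. The one point worth tightening is that your recursion implicitly treats each level as a single gate; for a DAG layer with several gates in parallel the same bound holds coordinatewise under the $\ell_\infty$ norm you chose, which is worth saying explicitly so the induction is well-posed.
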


\begin{proof}
The proof follows by a straightforward induction on network depth $d$, using the chain rule and the gradient bound $\|\nabla \circ_\beta\| \leq \frac{\beta\sqrt{2}}{4}$ for each layer.
\end{proof}

\textbf{Numerical Considerations:} While large $\beta$ ensures logical accuracy, it may cause numerical instability due to steep gradients. In practice, $\beta$ can be gradually increased during training (annealing) to balance optimization stability with logical precision.

\subsection{Weight Constraints and Optimization on Logical Manifolds}

The categorical axioms of $\mathbb{L}_{\text{Bool}}$ translate into algebraic constraints on the network's parameters, defining a manifold of logically-valid models. For example, commutativity of conjunction requires weight symmetry ($w_1 = w_2$), while De Morgan's laws impose complex nonlinear equations relating the parameters of $\land, \lor, \text{and } \neg$.

Training these networks requires optimizing a loss function $\mathcal{L}(W)$ subject to these constraints, $G(W)=0$. While one could use penalty methods to approximate the constraints by adding violation terms to the loss function, they offer no formal guarantees about constraint satisfaction during training. To ensure logical validity is strictly maintained throughout training, we must treat optimization as a constrained problem on the manifold $\mathcal{M}_{\mathbb{L}}$, for which Riemannian gradient descent is the principled approach.

\begin{definition}[Logical Constraint Manifold]
The \emph{logical constraint manifold} is the set $\mathcal{M}_{\mathbb{L}} = \{W \in \mathbb{R}^n : G_i(W) = 0, \forall i\}$, where $\{G_i\}$ are the constraint functions derived from logical axioms.
\end{definition}

To optimize on this manifold, we use Riemannian gradient descent \cite{absil2009optimization}. We endow $\mathcal{M}_{\mathbb{L}}$ with the standard Riemannian structure inherited from the ambient space $\mathbb{R}^n$. At each point $W \in \mathcal{M}_{\mathbb{L}}$, this defines a tangent space $T_W\mathcal{M}_{\mathbb{L}}$, a projection operator $P_W: \mathbb{R}^n \to T_W\mathcal{M}_{\mathbb{L}}$, and a retraction operator $\text{Retr}_W$ that maps points from the tangent space back to the manifold.

\begin{algorithm}{Riemannian Gradient Descent on Logical Manifolds}
\textbf{Input:} Loss $\mathcal{L}$, initial point $W_0 \in \mathcal{M}_{\mathbb{L}}$, learning rate $\eta$
\textbf{Output:} Optimized parameters $W^* \in \mathcal{M}_{\mathbb{L}}$

\textbf{Algorithm:} For $t = 1, 2, \ldots$:
\begin{enumerate}
    \item Compute Euclidean gradient: $\nabla \mathcal{L}(W_t)$
    \item Project to tangent space: $\xi_{W_t} = P_{W_t}(\nabla \mathcal{L}(W_t))$ where $P_{W_t}(v) = v - \sum_i \frac{\langle v, \nabla G_i(W_t) \rangle}{\|\nabla G_i(W_t)\|^2} \nabla G_i(W_t)$
    \item Update and retract: $W_{t+1} = \text{Retr}_{W_t}(W_t - \eta \xi_{W_t})$
    \item Verify constraint satisfaction: $\|G(W_{t+1})\| \leq \epsilon$ for tolerance $\epsilon$
\end{enumerate}
\textbf{Return} $W$
\end{algorithm}

\begin{theorem}[Convergence of Riemannian Optimization]
\label{thm:riemannian-convergence}
Under standard smoothness conditions on the loss function and manifold geometry, the Riemannian gradient descent algorithm is guaranteed to converge to a critical point of $\mathcal{L}$ on $\mathcal{M}_{\mathbb{L}}$.
\end{theorem}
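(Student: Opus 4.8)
The plan is to prove this by instantiating the standard convergence argument for first-order Riemannian methods on embedded submanifolds (as in \cite{absil2009optimization} and its subsequent refinements), with $\mathcal{M}_{\mathbb{L}} \subset \mathbb{R}^n$ carrying the metric induced from the ambient Euclidean inner product. The ``standard smoothness conditions'' I would make explicit are: (i) the constraint functions $\{G_i\}$ satisfy linear independence (LICQ) on $\mathcal{M}_{\mathbb{L}}$, so that $\mathcal{M}_{\mathbb{L}}$ is a smooth embedded submanifold and the operator $P_W$ of the stated algorithm is literally the orthogonal projector onto $T_W\mathcal{M}_{\mathbb{L}}$; (ii) $\mathcal{L}$ is $C^1$ with locally Lipschitz Euclidean gradient $\nabla\mathcal{L}$; (iii) the sublevel set $S_0 = \{W \in \mathcal{M}_{\mathbb{L}} : \mathcal{L}(W) \le \mathcal{L}(W_0)\}$ is compact --- automatic when $\mathcal{M}_{\mathbb{L}}$ is compact (e.g.\ box-constrained weights) or when $\mathcal{L}$ is coercive; and (iv) $\text{Retr}$ is a genuine first-order retraction ($\text{Retr}_W(0) = W$, $D\,\text{Retr}_W(0) = \mathrm{id}$) whose curves $t \mapsto \text{Retr}_W(tv)$ have uniformly bounded second derivative over $S_0$.

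First I would establish a \emph{pullback descent lemma}: there exists $L > 0$ such that for all $W \in S_0$ and all $v \in T_W\mathcal{M}_{\mathbb{L}}$ with $\text{Retr}_W(v) \in S_0$,
$$\mathcal{L}(\text{Retr}_W(v)) \le \mathcal{L}(W) + \langle \nabla\mathcal{L}(W), v \rangle + \tfrac{L}{2}\norm{v}^2.$$
This comes from applying the Euclidean descent lemma to $\gamma(t) = \mathcal{L}(\text{Retr}_W(tv))$ and folding the Lipschitz constant of $\nabla\mathcal{L}$ together with the uniform bound on $\norm{\ddot\gamma}$ into a single $L$; compactness of $S_0$ is exactly what makes these bounds uniform in $W$. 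Writing $\xi_W = P_W(\nabla\mathcal{L}(W))$ for the Riemannian gradient (the quantity $\xi_{W_t}$ computed by the algorithm) and using self-adjointness and idempotence of $P_W$, so that $\langle \nabla\mathcal{L}(W), -\eta\,\xi_W \rangle = -\eta\norm{\xi_W}^2$, substituting $v = -\eta\,\xi_{W_t}$ yields the \emph{sufficient-decrease inequality}
$$\mathcal{L}(W_{t+1}) \le \mathcal{L}(W_t) - \eta\bigl(1 - \tfrac{L\eta}{2}\bigr)\norm{\xi_{W_t}}^2.$$
For any fixed step size $\eta \in (0, 1/L)$ (or one chosen by Armijo backtracking) the right-hand coefficient is positive, so $\mathcal{L}$ strictly decreases and $W_{t+1} \in S_0$, which inductively legitimizes applying the lemma at every iterate and keeps the whole trajectory inside the compact set.

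Summing the sufficient-decrease inequality over $t = 0, \ldots, T-1$ and using that $\mathcal{L}$ is bounded below on $S_0$ gives $\sum_{t \ge 0}\norm{\xi_{W_t}}^2 \le \bigl(\mathcal{L}(W_0) - \inf_{S_0}\mathcal{L}\bigr)\big/\bigl(\eta(1 - L\eta/2)\bigr) < \infty$, hence $\norm{\xi_{W_t}} \to 0$; the usual refinement extracts the rate $\min_{t < T}\norm{\xi_{W_t}}^2 = O(1/T)$. Since $\{W_t\} \subset S_0$ lies in a compact set it has an accumulation point $W^\ast \in \mathcal{M}_{\mathbb{L}}$, and continuity of $W \mapsto \xi_W = P_W\nabla\mathcal{L}(W)$ on the smooth locus forces $\xi_{W^\ast} = 0$, i.e.\ $\nabla\mathcal{L}(W^\ast) \perp T_{W^\ast}\mathcal{M}_{\mathbb{L}}$. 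Expanding $T_{W^\ast}\mathcal{M}_{\mathbb{L}} = \bigl(\mathrm{span}\{\nabla G_i(W^\ast)\}\bigr)^{\perp}$, this says $\nabla\mathcal{L}(W^\ast) = \sum_i \lambda_i \nabla G_i(W^\ast)$ for some multipliers $\lambda_i$, which is exactly the first-order (KKT) stationarity condition for $\min\{\mathcal{L}(W) : G(W) = 0\}$ --- i.e.\ $W^\ast$ is a critical point of $\mathcal{L}$ on $\mathcal{M}_{\mathbb{L}}$, as claimed.

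The step I expect to be the main obstacle is twofold, and both parts live in hypotheses (i)--(iv) rather than in the telescoping argument. The first is the uniform second-order control on the retraction curves needed for the pullback descent lemma; this is standard on compact manifolds but must be verified for whatever concrete retraction one uses on $\mathcal{M}_{\mathbb{L}}$ (metric projection, or one QR/SVD step). The second, more delicate, is the constraint qualification itself: the De Morgan axioms translate to nonlinear polynomial equations in the weights, so $\mathcal{M}_{\mathbb{L}}$ can be singular where the $\nabla G_i$ become linearly dependent, and both the algorithm's projector and the convergence claim are only meaningful on the regular stratum. I would handle this by stating the theorem on that smooth locus (the only place the iteration is well-defined), or, for a globally defined iteration, by replacing $P_W$ with the projector built from the Moore--Penrose pseudoinverse of the Jacobian $DG(W)$ and re-running the identical estimates; nothing else in the proof changes.
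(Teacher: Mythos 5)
The paper offers no proof of this theorem at all: it is stated bare, with the ``standard smoothness conditions'' left unspecified and the result implicitly deferred to the textbook treatment in Absil--Mahony--Sepulchre. Your proposal therefore cannot be compared against a paper proof; what it does is supply the argument the paper is silently relying on, and it does so correctly. The chain (compact sublevel set $\Rightarrow$ uniform pullback descent lemma $\Rightarrow$ sufficient decrease for $\eta < 1/L$ $\Rightarrow$ square-summability and vanishing of $\norm{\xi_{W_t}}$ $\Rightarrow$ stationarity of accumulation points, read off as KKT stationarity via $T_{W}\mathcal{M}_{\mathbb{L}} = \mathrm{span}\{\nabla G_i(W)\}^{\perp}$) is the standard and correct route, and your use of self-adjointness and idempotence of $P_W$ to turn the Euclidean inner product into $-\eta\norm{\xi_{W_t}}^2$ is exactly right. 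One small honesty point worth keeping: this argument yields $\norm{\xi_{W_t}} \to 0$ and stationarity of every accumulation point, not convergence of the iterates to a single critical point; the theorem's phrasing ``converge to a critical point'' is only literally true under an additional hypothesis (e.g.\ isolated critical points, or a \L{}ojasiewicz-type argument, which is actually available here since the $G_i$ are polynomial). More importantly, your closing caveat identifies the real gap the paper papers over: the constraint set $\mathcal{M}_{\mathbb{L}}$ is a real algebraic variety cut out by the polynomial equations coming from the logical axioms, and nothing in the paper guarantees it is a smooth manifold or that LICQ holds, yet the algorithm's projector $P_W$ (which normalizes by $\norm{\nabla G_i(W_t)}^2$ rather than inverting the Gram matrix, and so is not even the correct orthogonal projector when the $\nabla G_i$ are non-orthogonal) presupposes regularity. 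Restricting to the regular stratum or using the pseudoinverse of $DG(W)$, as you propose, is the right fix, and making those hypotheses explicit is a genuine improvement over the paper's statement.
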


This principled geometric approach ensures that network parameters remain on the manifold of logically-valid models throughout training. Alternative approaches, such as penalty-based methods that add constraint violations to the loss function, offer a trade-off between computational cost and exact constraint satisfaction but lack the guarantees of the Riemannian framework.

\subsection{Distributivity Law as Architectural Example}

We can construct an architecture enforcing the distributivity axiom $A \land (B \lor C) = (A \land B) \lor (A \land C)$, instantiating Theorem~\ref{thm:main-correspondence}. The architecture consists of two parallel computational arms:

\textbf{Left Arm:} Computes $A \land (B \lor C)$ via: (1) $\lor_\beta$ layer taking inputs $B, C$, (2) $\land_\beta$ layer combining output with input $A$.

\textbf{Right Arm:} Computes $(A \land B) \lor (A \land C)$ via: (1) two parallel $\land_\beta$ layers computing $A \land B$ and $A \land C$, (2) $\lor_\beta$ layer combining their outputs.

The categorical construction enforces parameter constraints through weight sharing: all $\land_\beta$ layers share identical parameters $(W_{\land}, b_{\land})$, and all $\lor_\beta$ layers share $(W_{\lor}, b_{\lor})$. Additionally, the constraint $\mathcal{L}_{\text{equiv}}(W) = \|\text{LeftArm}(A,B,C) - \text{RightArm}(A,B,C)\|^2 = 0$ must hold for all inputs, defining part of the logical constraint manifold. A case analysis for all boolean inputs $(A,B,C) \in \{0,1\}^3$ confirms the two arms produce identical outputs when the distributivity constraint is satisfied. This demonstrates how our framework translates logical laws into architectural constraints that guarantee correctness by design.

\section{Universality and Expressivity Results}

Having established the constructive correspondence, we now develop its fundamental mathematical properties. These results characterize the scope and limitations of our approach, providing precise bounds on what can be achieved through categorical construction.

\subsection{Universal Property of the Construction}

The universality of our construction is its most fundamental property, establishing that our categorical approach captures \emph{all} possible ways to build neural networks that respect a given logical theory.

\begin{definition}[Structurally Logical Neural Network]
\label{def:structurally-logical}
A neural network $\mathcal{N}$ is \emph{structurally logical} with respect to a theory $\mathbb{T} = (\Sigma, \Phi)$ if its architecture is a Directed Acyclic Graph of layers implementing operations from $\Sigma$, and its parameters satisfy the algebraic constraints derived from the axioms $\Phi$.
\end{definition}

For example, a network implementing boolean logic must have layers corresponding to $\{\land, \lor, \neg\}$ with parameters satisfying commutativity and De Morgan's law constraints.

\begin{theorem}[Structural Equivalence of Logically-Consistent Architectures]
\label{thm:structural-universality}
Every structurally logical neural network $\mathcal{N}$ arises uniquely as a model of the corresponding Lawvere theory $\mathbb{L}_{\mathbb{T}}$ in $\textbf{Para}$, up to natural isomorphism (reparameterization). This establishes our categorical construction as the canonical generator for the class of structurally logical architectures.
\end{theorem}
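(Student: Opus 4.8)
The plan is to upgrade the one-directional construction of Theorem~\ref{thm:main-correspondence} into an equivalence of categories. Let $\mathcal{C}_{\mathbb{T}}$ be the category whose objects are structurally logical neural networks for $\mathbb{T} = (\Sigma,\Phi)$ in the sense of Definition~\ref{def:structurally-logical}, and whose morphisms are reparameterizations, i.e.\ the 2-morphisms of $\textbf{Para}$ restricted to constraint-preserving maps of parameter spaces. Let $\mathrm{Mod}(\mathbb{L}_{\mathbb{T}},\textbf{Para})$ be the category of finite-product-preserving functors $\mathbb{L}_{\mathbb{T}} \to \textbf{Para}$ with natural (2-)transformations between them. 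The theorem amounts to exhibiting an equivalence $\mathcal{C}_{\mathbb{T}} \simeq \mathrm{Mod}(\mathbb{L}_{\mathbb{T}},\textbf{Para})$, together with the observation that every object of $\mathcal{C}_{\mathbb{T}}$ lies, up to isomorphism, in the essential image of the construction of Theorem~\ref{thm:main-correspondence}.

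First I would build the semantics functor $F:\mathcal{C}_{\mathbb{T}} \to \mathrm{Mod}(\mathbb{L}_{\mathbb{T}},\textbf{Para})$. Given a structurally logical network $\mathcal{N}$ with value type $A$, each connective $c\in\Sigma$ of arity $k$ is implemented by a layer carrying a parameter space $P_c$ and a parametric map $f_c:P_c\times A^k\to A$, with the DAG assembled from these layers plus copying and projection wiring. Define $F(\mathcal{N})$ on objects by $n\mapsto A^n$ and on the generating morphisms of $\mathbb{L}_{\mathbb{T}}$ by $c\mapsto (P_c,f_c)$, extending along composition and tupling. The crucial point is well-definedness: since $\mathbb{L}_{\mathbb{T}}$ is, by step~1 of the proof of Theorem~\ref{thm:main-correspondence}, the free product category on $\Sigma$ quotiented by the categorical equations of $\Phi$, the assignment descends to the quotient precisely when the parameters of $\mathcal{N}$ satisfy the $\Phi$-constraints — which is exactly the defining condition of being structurally logical. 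A reparameterization $\mathcal{N}\Rightarrow\mathcal{N}'$ acts componentwise on the $P_c$, hence induces a natural transformation $F(\mathcal{N})\Rightarrow F(\mathcal{N}')$, and functoriality of $F$ follows routinely.

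Next I would observe that steps~2--3 of Theorem~\ref{thm:main-correspondence}, read in reverse, yield a functor $G:\mathrm{Mod}(\mathbb{L}_{\mathbb{T}},\textbf{Para}) \to \mathcal{C}_{\mathbb{T}}$: a product-preserving functor $M$ is determined by the data $(M(1),\{M(c)\}_{c\in\Sigma})$ subject to the $\Phi$-relations, and this data assembles into a concrete DAG network $\mathcal{N}_M$ whose layers realize the $M(c)$ and whose weight-sharing pattern encodes the common parameter space per connective. Then I would verify the two round trips. $F\circ G\cong\mathrm{id}$ holds essentially on the nose, since reading off the layer data of $\mathcal{N}_M$ recovers exactly $(M(1),\{M(c)\})$. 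For $G\circ F\cong\mathrm{id}$, the network $\mathcal{N}_{F(\mathcal{N})}$ may present a morphism of $\mathbb{L}_{\mathbb{T}}$ by a different DAG than $\mathcal{N}$ (two syntactically distinct but $\Phi$-provably-equal derivations), but both realize the same functor, so Proposition~\ref{prop:uniqueness} supplies a canonical reparameterization identifying them, i.e.\ a natural isomorphism in $\mathcal{C}_{\mathbb{T}}$. This establishes the equivalence; the embedding $\mathcal{E}$ of Proposition~\ref{prop:lawvere-para-embedding} appears as the generic model, and the "canonical generator" claim is precisely essential surjectivity of $G$.

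The main obstacle I anticipate is the well-definedness and functoriality of $F$ on \emph{composite} morphisms — equivalently, showing that the parameter constraints extracted from $\Phi$ are not merely necessary but exactly sufficient for $c\mapsto(P_c,f_c)$ to descend to the quotient $\mathbb{L}_{\mathbb{T}}$. This requires careful bookkeeping of how a categorical equation in $\Phi$ (De Morgan, distributivity, commutativity) unfolds under the product structure of $\textbf{Para}$ into an equality of parametric maps, and then into an algebraic equation on the shared weights; the subtlety is that composition in $\textbf{Para}$ multiplies parameter spaces, so the same relation must be shown to hold for every way of wiring intermediate parameters, which is where strictness of the 2-category $\textbf{Para}$ and the coherence of the tupling maps $\langle -,\ldots,-\rangle$ must be invoked. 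A secondary technical point is matching morphisms bijectively on both sides — fullness and faithfulness of $F$ — which again reduces to the fact that a natural transformation of models is exactly a compatible family of maps of the per-connective parameter spaces, and that constraint-preserving reparameterizations of $\mathcal{N}$ are exactly those families.
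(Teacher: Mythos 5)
Your proposal follows essentially the same route as the paper's Appendix E proof: both establish an equivalence of categories between structurally logical networks (with reparameterizations as morphisms) and $\mathrm{Mod}(\mathbb{L}_{\mathbb{T}},\textbf{Para})$ via a pair of functors $F$ and $G$, checking that the constraint-satisfaction clause of Definition~\ref{def:structurally-logical} is exactly what makes $F$ descend to the quotient, and verifying both round trips up to reparameterization. Your discussion of where the well-definedness of $F$ on composites could fail is in fact more explicit than the paper's own treatment, but the overall structure and key ideas coincide.
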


\begin{proof}[Proof Sketch]
We establish a bijection between structurally logical networks and models of $\mathbb{L}_{\mathbb{T}}$ in `Para`. The forward map extracts parametric morphisms from a network's layers, while the reverse map constructs a network from a model. The structural properties of the network (Definition \ref{def:structurally-logical}) ensure this mapping respects the functoriality and categorical equations of the model. The full proof, including verification that these maps are inverses, is in Appendix E.
\end{proof}

\begin{remark}[Practical Implications of Universality]
This theorem proves that \emph{any} structurally logical network must live on the constraint manifold defined by the logical axioms, providing mathematical justification for the specialized Riemannian optimization techniques developed in Section 3.
\end{remark}

\begin{corollary}[Completeness of the Framework]
Our construction provides a complete characterization of structurally logical neural architectures: every such architecture arises from our correspondence, and every architecture from our correspondence is structurally logical.
\end{corollary}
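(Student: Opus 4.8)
The plan is to derive the Corollary directly from Theorem~\ref{thm:structural-universality} together with Theorem~\ref{thm:main-correspondence}, since the Corollary simply repackages the two directions of the established bijection as a ``completeness'' statement. The two assertions to prove are: (i) every structurally logical architecture arises from our correspondence, and (ii) every architecture produced by our correspondence is structurally logical. Direction (i) is essentially the content of the forward map in Theorem~\ref{thm:structural-universality}: given a structurally logical $\mathcal{N}$ for $\mathbb{T} = (\Sigma, \Phi)$, one extracts from each layer the parametric morphism $(P_c, f_c) \in \textbf{Para}$ implementing the connective $c \in \Sigma$, and Definition~\ref{def:structurally-logical} guarantees that the parameter constraints coming from $\Phi$ are satisfied, so the assignment $c \mapsto (P_c, f_c)$ respects the categorical equations of $\mathbb{L}_{\mathbb{T}}$ and hence defines a genuine model $M: \mathbb{L}_{\mathbb{T}} \to \textbf{Para}$. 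Applying the realization step (Algorithm 3 / step 3 of Theorem~\ref{thm:main-correspondence}) to $M$ recovers $\mathcal{N}$ up to 2-morphism, which is exactly ``arises from our correspondence.''

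Direction (ii) runs the construction forward: starting from $\mathbb{T}$, step~1 of Theorem~\ref{thm:main-correspondence} produces $\mathbb{L}_{\mathbb{T}}$ by quotienting the free category on $\Sigma$ by the equations of $\Phi$; step~2 assigns differentiable parametric functions to the connectives; step~3 realizes these as layers of $\mathcal{N}_{\mathbb{T}}$ with network topology mirroring categorical composition. I would then verify the two clauses of Definition~\ref{def:structurally-logical} for $\mathcal{N}_{\mathbb{T}}$: the DAG-of-layers condition holds because the composition structure of a finite expression in $\mathbb{L}_{\mathbb{T}}$ is a finite acyclic string diagram, and each vertex is a layer implementing some $c \in \Sigma$ by step~2; the parameter-constraint condition holds because functoriality of the model $\mathbb{L}_{\mathbb{T}} \to \textbf{Para}$ forces the images of the axiom equations of $\Phi$ to commute, which is precisely the statement that the extracted weights satisfy $G_i(W) = 0$. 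Thus $\mathcal{N}_{\mathbb{T}}$ is structurally logical.

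Finally I would close the loop by noting that the two maps are mutually inverse up to 2-morphism in $\textbf{Para}$, which is already asserted by Theorem~\ref{thm:structural-universality} and Proposition~\ref{prop:uniqueness}; this upgrades the pair of containments (i)--(ii) into the claimed complete characterization. I expect the only real subtlety — and hence the main obstacle — to be bookkeeping at the boundary between ``syntactic'' structural data (a DAG of typed layers with shared weights) and ``semantic'' categorical data (a product-preserving functor out of $\mathbb{L}_{\mathbb{T}}$): one must check that the weight-sharing discipline demanded by Definition~\ref{def:structurally-logical} is exactly what makes a layer assignment well-defined on morphisms of the Lawvere theory rather than merely on generators, so that no spurious structurally logical networks escape the correspondence and no categorical models fail to realize as legitimate architectures. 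Since Theorem~\ref{thm:structural-universality} already discharges this equivalence (its proof in Appendix E verifies the maps are inverses), the Corollary follows formally, and I would present it as a short deduction rather than a fresh argument.
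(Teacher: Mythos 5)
Your proposal is correct and takes essentially the same route as the paper: the corollary is stated there without a separate proof, as an immediate repackaging of the two functors $F$ (network $\to$ model, your direction (i)) and $G$ (model $\to$ network, your direction (ii)) from the equivalence of categories established in the proof of Theorem~\ref{thm:structural-universality} in Appendix~E, whose ``Verification of SLNN Property'' step is precisely your check of Definition~\ref{def:structurally-logical}. Your closing remark about weight-sharing being what makes the layer assignment well-defined on morphisms rather than merely on generators is a fair identification of the point the paper handles implicitly.
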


\begin{theorem}[Approximation Error Bounds with Temperature Dependence]
\label{thm:approximation-validity}
Let $\mathcal{N}_\beta$ be a neural network constructed via our framework with temperature $\beta$. For any logical axiom $(E_1 \equiv E_2)$ involving expressions of maximum depth $d$ and boolean inputs $x \in \{0,1\}^n$:
\begin{equation}
|\mathcal{N}_\beta^{E_1}(x) - \mathcal{N}_\beta^{E_2}(x)| \leq C \cdot \left(\frac{\beta\sqrt{2}}{4}\right)^d \cdot e^{-\alpha\beta}
\end{equation}
for constants $C, \alpha > 0$. For propositional logic, $\alpha \geq 0.5$. This bound combines two error sources: the Lipschitz factor $(\frac{\beta\sqrt{2}}{4})^d$ capturing error propagation through network depth, and the exponential term $e^{-\alpha\beta}$ reflecting the approximation quality of individual boolean operations (Lemma~\ref{lem:error-bounds}). The multiplicative combination arises because logical expressions are composed functions, where local approximation errors compound through composition according to the network's Lipschitz constant.
\end{theorem}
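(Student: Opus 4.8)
The plan is to reduce the two-expression comparison to a one-expression approximation estimate and then prove the latter by structural induction on parse trees; the crux is handling the simultaneous presence of a Lipschitz-type growth factor and an exponential-decay factor. First, since $(E_1 \equiv E_2)$ is an axiom of $\mathbb{T}$ it is valid, hence holds in the standard two-valued model; so for every boolean assignment $x \in \{0,1\}^n$ the exact evaluations agree, $E_1(x) = E_2(x) =: b \in \{0,1\}$. By the triangle inequality,
\[
|\mathcal{N}_\beta^{E_1}(x) - \mathcal{N}_\beta^{E_2}(x)| \;\le\; |\mathcal{N}_\beta^{E_1}(x) - b| + |\mathcal{N}_\beta^{E_2}(x) - b|,
\]
so it suffices to prove $|\mathcal{N}_\beta^{E}(x) - E(x)| \le C_0 (\tfrac{\beta\sqrt 2}{4})^{d} e^{-\alpha\beta}$ for every expression $E$ of depth at most $d$ and every boolean $x$; the theorem then follows with $C = 2C_0$.

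For the one-expression estimate I would induct on $d$. The base case $d=0$ ($E = x_i$) has error zero. For $E = c(E_1,\dots,E_k)$ with each $E_j$ of depth $\le d-1$, set $u_j = \mathcal{N}_\beta^{E_j}(x)$, $b_j = E_j(x) \in \{0,1\}$, and split
\[
|\mathcal{N}_\beta^{E}(x) - E(x)| = |c_\beta(u) - c(b)| \le |c_\beta(u) - c_\beta(b)| + |c_\beta(b) - c(b)|.
\]
The second (local) term is the error of a single relaxed connective on boolean inputs: by Lemma~\ref{lem:error-bounds} it is $\le (1+e^{0.5\beta})^{-1} \le e^{-0.5\beta}$ for $\land_\beta, \lor_\beta$ and $\le (1+e^{\beta})^{-1}$ for $\neg_\beta$, hence $\le e^{-\alpha\beta}$ with $\alpha = \tfrac12$ (sharp when the axiom involves a conjunction or disjunction). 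The first (propagation) term is controlled by the per-layer gradient bound underlying Theorem~\ref{thm:differentiable-correctness}, $\|\nabla c_\beta\|_2 \le \tfrac{\beta\sqrt 2}{4}$ (for $\neg_\beta$ the bound is $\tfrac{\beta}{2}$; one either states the base as $\tfrac{\beta}{2}$ or absorbs the discrepancy into $C$), giving $|c_\beta(u) - c_\beta(b)| \le \tfrac{\beta\sqrt 2}{4}\|u-b\|_2 \le \tfrac{\beta\sqrt 2}{4}\sqrt{k}\,\max_j|u_j - b_j|$. Letting $\epsilon_m$ denote the maximum of $|\mathcal{N}_\beta^{E'}(x) - E'(x)|$ over expressions $E'$ of depth $\le m$ and boolean $x$, this yields $\epsilon_d \le \tfrac{\beta\sqrt 2}{4}\sqrt{k_{\max}}\,\epsilon_{d-1} + e^{-\alpha\beta}$ with $\epsilon_0 = 0$, which unrolls to a geometric sum dominated (in the regime $\beta \gtrsim 1$ of interest) by $C_0(\tfrac{\beta\sqrt 2}{4})^{d}e^{-\alpha\beta}$, where $C_0$ depends only on $d$ and the maximal connective arity $k_{\max}$ — fixed data of the finitary theory and the chosen axiom. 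This also explains the product form of the bound: $(\tfrac{\beta\sqrt 2}{4})^d$ is exactly the depth-$d$ Lipschitz constant of Theorem~\ref{thm:differentiable-correctness} governing how perturbations introduced at deep layers amplify toward the output, while $e^{-\alpha\beta}$ is the size of each such perturbation from Lemma~\ref{lem:error-bounds}, and the two compound multiplicatively because a logical expression is a composition of connectives.

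The main obstacle is precisely this tension: the estimate must tolerate a factor growing polynomially in $\beta$ while still certifying exponential decay, so the induction has to prevent the small local perturbations from being washed out by the amplification — this forces the geometric-sum bookkeeping and the norm conversions ($\ell^\infty$ among the children versus $\ell^2$ in the gradient bound, which contributes the $\sqrt{k_{\max}}$) to be carried out carefully, although each only affects $C$ for a fixed theory. Two further points I would handle by remark rather than real computation: the slightly larger gradient of $\neg_\beta$ (as above), and the fact that the estimate concerns the canonical architecture built from the fixed-margin relaxations of Proposition~\ref{prop:continuous-relaxation}, not parameters already optimized onto the constraint manifold $\mathcal{M}_{\mathbb{L}}$, where $E_1 \equiv E_2$ holds exactly and the bound is trivial; thus the theorem quantifies exactly the residual logical error present before, or in the absence of, the Riemannian training of Section 3.
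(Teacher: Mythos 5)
The paper supplies no proof of this theorem --- it is the one substantive result with no appendix entry, only the informal rationale folded into the statement itself --- and your argument is a correct, rigorous implementation of exactly that rationale: reduce to a single-expression error via the triangle inequality through the common boolean value $b$, then induct on depth, combining the per-layer gradient bound behind Theorem~\ref{thm:differentiable-correctness} with the local errors of Lemma~\ref{lem:error-bounds} into the recurrence $\epsilon_d \le L\,\epsilon_{d-1} + e^{-\alpha\beta}$ and unrolling the geometric sum. The bookkeeping you flag is genuinely needed and correctly handled: the $\sqrt{k_{\max}}$ conversion and the larger $\beta/2$ gradient of $\neg_\beta$ are absorbable into $C$ only because $d$ is fixed for a given axiom, and your restriction to $\beta$ bounded away from $0$ is not merely convenient but necessary --- for De Morgan's law the two arms differ by $\Theta(\beta)$ as $\beta \to 0$ while the claimed bound is $O(\beta^2)$, so the inequality as stated fails for small $\beta$ and your proof correctly identifies the regime in which the theorem actually holds.
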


\begin{remark}[Practical Implications of Temperature Control]
This theorem provides a practical knob, $\beta$, for practitioners. A low $\beta$ allows for smoother gradients and easier initial training, while a high $\beta$ enforces near-perfect logical rigor, crucial for final deployment in safety-critical applications.
\end{remark}

\subsection{Compositional Properties and Expressivity}

A key advantage of the categorical approach is its natural handling of composition, enabling the modular construction of complex reasoning architectures.

\begin{proposition}[Logical Modularity]
\label{prop:modularity}
Given architectures $\mathcal{N}_1$ and $\mathcal{N}_2$ for theories $\mathbb{T}_1$ and $\mathbb{T}_2$:
\begin{enumerate}
    \item Sequential composition $\mathcal{N}_1 \circ \mathcal{N}_2$ implements logical pipelines.
    \item Parallel composition $\mathcal{N}_1 \times \mathcal{N}_2$ implements the product theory $\mathbb{T}_1 \times \mathbb{T}_2$, allowing simultaneous, independent reasoning in both systems.
\end{enumerate}
These compositions preserve logical correctness. For example, one can compose propositional and modal logic networks ($\mathcal{N}_{\text{Prop}} \times \mathcal{N}_{\text{Modal}}$) to reason about epistemic statements involving both truth values and knowledge operators.
\end{proposition}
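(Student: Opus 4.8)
The plan is to leverage the functoriality of the construction $\mathbb{T} \mapsto \mathcal{N}_{\mathbb{T}}$ established in Theorem~\ref{thm:main-correspondence} together with the native 2-categorical structure of $\textbf{Para}$: composition of 1-morphisms models sequential layer stacking, while the Cartesian product of objects (sets) lifts to a product of 1-morphisms. All three claims then reduce to categorical bookkeeping once the constructions are set up correctly. For claim (3) I would not re-prove correctness from scratch but instead invoke Corollary~\ref{cor:logical-correctness}, after checking that the composite architectures are again structurally logical in the sense of Definition~\ref{def:structurally-logical}, so that the hypothesis of the corollary is met.

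\textbf{Sequential composition.} Given $\mathcal{N}_2 = (P_2, f_2 \colon P_2 \times A \to B)$ and $\mathcal{N}_1 = (P_1, f_1 \colon P_1 \times B \to C)$ realizing $\mathbb{T}_2$ and $\mathbb{T}_1$, the composite $\mathcal{N}_1 \circ \mathcal{N}_2$ is the 1-morphism $(P_1 \times P_2, h)$ with $h((p_1,p_2),a) = f_1(p_1, f_2(p_2, a))$. Unwinding $h$, a forward pass through the composite is the forward pass of $\mathcal{N}_2$ followed by that of $\mathcal{N}_1$; applying Corollary~\ref{cor:logical-correctness} to each factor, the first stage is a valid derivation in $\mathbb{T}_2$ whose conclusion is fed, as hypothesis, into a valid derivation in $\mathbb{T}_1$ — precisely a derivation in the pipeline. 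The parameter-constraint locus of the composite is the product manifold $\mathcal{M}_{\mathbb{L}_2} \times \mathcal{M}_{\mathbb{L}_1}$, so structural logicality is preserved, and since $\textbf{Para}$-composition is a 2-functor the whole argument is natural in reparameterizations, yielding the claimed up-to-2-morphism statement.

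\textbf{Parallel composition.} I would set $\mathcal{N}_1 \times \mathcal{N}_2$ to act on $A_1 \times A_2$ with parameter space $P_1 \times P_2$ and $f((p_1,p_2),(a_1,a_2)) = (f_1(p_1,a_1), f_2(p_2,a_2))$, i.e. the product 1-morphism in $\textbf{Para}$. The key step is to identify the Lawvere theory $\mathbb{L}_{\mathbb{T}_1 \times \mathbb{T}_2}$ of the product logical theory — whose signature is $\Sigma_1 \sqcup \Sigma_2$ and whose axioms $\Phi_1 \sqcup \Phi_2$ each mention connectives from only one factor — and to check that a model of it in any category with finite products is exactly a pair consisting of a model of $\mathbb{L}_{\mathbb{T}_1}$ and one of $\mathbb{L}_{\mathbb{T}_2}$. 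Granting this, the embedding of Proposition~\ref{prop:lawvere-para-embedding} sends it to the product of the two parametric models, which is $\mathcal{N}_1 \times \mathcal{N}_2$; Lemma~\ref{lem:consistency} keeps the combined equational system consistent, and Proposition~\ref{prop:uniqueness} identifies this with the canonical architecture $\mathcal{N}_{\mathbb{T}_1 \times \mathbb{T}_2}$ up to 2-morphism. Because $\Phi_1$ and $\Phi_2$ never interact, the two arms share no weights and reasoning in one system is independent of the other; instantiating with $\mathcal{N}_{\text{Prop}} \times \mathcal{N}_{\text{Modal}}$ gives the epistemic example.

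\textbf{Main obstacle.} The delicate part is the categorical algebra behind ``product theory'': one must pin down the correct notion — the product in the appropriate category of Lawvere theories, whose models are \emph{pairs} of models, as opposed to the tensor product, whose models carry two commuting structures on a single object — and verify that the $\textbf{Para}$ semantics commutes with it, i.e. that the model functor built from $\mathcal{E}$ preserves the relevant products. This requires checking that $\textbf{Para}$'s Cartesian product of objects genuinely induces categorical products at the level of parametric maps needed here, and that $\Phi_1 \sqcup \Phi_2$ admits no cross-factor equations — the precise reason independence holds. By comparison, the sequential case is essentially formal once one fixes what a ``logical pipeline'' means (the conclusion type of $\mathbb{T}_2$ being an admissible hypothesis type for $\mathbb{T}_1$); the only point to watch is that $\textbf{Para}$-composition multiplies rather than quotients parameter spaces, so no constraints are silently dropped.
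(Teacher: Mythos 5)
The paper states this proposition without proof---it is one of the few results with neither an inline argument nor an appendix section---so there is no official argument to compare yours against; I can only assess your proposal on its own terms. Your overall route (reduce sequential composition to 1-morphism composition in $\textbf{Para}$, reduce parallel composition to a product of models, then invoke Corollary~\ref{cor:logical-correctness} and Proposition~\ref{prop:uniqueness}) is the natural one, and you correctly identify the delicate point as the meaning of ``product theory.'' But the concrete candidate you offer does not resolve it: a single-sorted theory with signature $\Sigma_1 \sqcup \Sigma_2$ and axioms $\Phi_1 \sqcup \Phi_2$ is the \emph{coproduct} (sum) of Lawvere theories, whose models are a \emph{single} carrier equipped with both structures and no interaction---not the pairs of models on separate carriers that the paper's $\mathcal{N}_1 \times \mathcal{N}_2$ acting on $A_1 \times A_2$ requires. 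To make ``models are pairs'' literally true you need either a two-sorted theory (which falls outside the paper's single-sorted Definition of a Lawvere theory) or to work directly in $\mathrm{Mod}(\mathbb{L}_{\mathbb{T}_1}) \times \mathrm{Mod}(\mathbb{L}_{\mathbb{T}_2})$ without packaging the product as a single Lawvere theory; the latter is the safer reading and still yields independence of the two arms.

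A second, smaller gap: for the sequential case you propose to check that $\mathcal{N}_1 \circ \mathcal{N}_2$ is structurally logical in the sense of Definition~\ref{def:structurally-logical} so that Corollary~\ref{cor:logical-correctness} applies, but that definition is relative to a single theory, and you never say which theory the composite is structurally logical for. Since ``implements logical pipelines'' is given no formal meaning in the paper, the honest conclusion is the one your unwinding of $h$ already delivers---each stage of the forward pass is a valid derivation in its own theory, with the conclusion of one fed as hypothesis to the next---and the appeal to the corollary for the composite as a whole should be dropped or replaced by an explicit definition of the pipeline theory. Neither issue is fatal, and both trace back to imprecision in the proposition itself, but they should be surfaced as hypotheses rather than left implicit.
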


The guarantees of our framework come at the cost of principled restrictions on expressivity.

\begin{theorem}[Expressivity via Term Algebras]
\label{thm:expressivity}
A neural architecture $\mathcal{N}_{\mathbb{T}}$ derived from theory $\mathbb{T}$ can express exactly the functions belonging to the term algebra $T_{\mathbb{T}}$ generated by $\mathbb{T}$.
\end{theorem}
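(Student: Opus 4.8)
The plan is to prove both inclusions of the claimed equality $\{\text{functions expressible by }\mathcal{N}_{\mathbb{T}}\} = T_{\mathbb{T}}$ by exploiting the categorical correspondence already established in Theorem~\ref{thm:main-correspondence} and Theorem~\ref{thm:structural-universality}. First I would fix notation: the term algebra $T_{\mathbb{T}}$ is the free $\mathbb{T}$-algebra on a countable set of generators, equivalently the set of equivalence classes of $\Sigma$-terms modulo the axioms $\Phi$; under the Lawvere-theory formulation, its elements of arity $n$ are precisely the hom-sets $\mathbb{L}_{\mathbb{T}}(n,1)$. So the theorem is really the statement that the functions realized by forward passes of $\mathcal{N}_{\mathbb{T}}$ are in bijection with $\bigsqcup_n \mathbb{L}_{\mathbb{T}}(n,1)$, and this should follow almost formally from the fact that $\mathcal{N}_{\mathbb{T}}$ was constructed as (the image under the embedding $\mathcal{E}$ of Proposition~\ref{prop:lawvere-para-embedding} of) a model of $\mathbb{L}_{\mathbb{T}}$.

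The forward inclusion ($\subseteq$) proceeds by structural induction on network depth, mirroring the argument sketched for Corollary~\ref{cor:logical-correctness}. The base case: input nodes realize projection morphisms $\pi_i : n \to 1$, which are generating morphisms of $\mathbb{L}_{\mathbb{T}}$ and hence lie in $T_{\mathbb{T}}$. Inductive step: any layer of $\mathcal{N}_{\mathbb{T}}$ implements a connective $c \in \Sigma$ applied to the outputs of earlier sub-networks; by Definition~\ref{def:categorical-interpretation}, composing $c_{\mathbb{L}_{\mathbb{T}}}$ with the tuple of inductively-obtained terms yields another morphism $n \to 1$ in $\mathbb{L}_{\mathbb{T}}$, i.e. another element of $T_{\mathbb{T}}$. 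Because $\mathcal{N}_{\mathbb{T}}$ is a DAG of $\Sigma$-layers (Definition~\ref{def:structurally-logical}), every forward pass is exactly such a finite composite, so its realized function is the image of a term. The reverse inclusion ($\supseteq$) uses the explicit Algorithm in the proof of Theorem~\ref{thm:main-correspondence}: given any term $t \in T_{\mathbb{T}}$, pick a representative $\Sigma$-expression, read off its parse tree, and instantiate one layer per node using the shared parametric maps $(P_c, f_c)$; the resulting sub-network of $\mathcal{N}_{\mathbb{T}}$ realizes exactly $t$. The weight-sharing and constraint-manifold conditions guarantee that two expressions equal in $T_{\mathbb{T}}$ (i.e. related by $\Phi$) are realized by behaviorally identical sub-networks, so the correspondence descends to equivalence classes and is well-defined on $T_{\mathbb{T}}$ itself.

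The main obstacle is the ``exactly'' — proving that $\mathcal{N}_{\mathbb{T}}$ expresses \emph{no more} than $T_{\mathbb{T}}$, which requires that the differentiable relaxations $\circ_\beta$ do not accidentally enlarge the expressible class. Here one must be careful: at finite $\beta$ the layer maps are genuine sigmoidal functions on $[0,1]$, not boolean operations, so strictly speaking $\mathcal{N}_\beta$ computes a continuous function that only \emph{agrees with} a term-algebra element on boolean inputs $\{0,1\}^n$. The honest way to state and prove the theorem is therefore to interpret ``express'' as ``compute on boolean inputs in the $\beta \to \infty$ limit'' (or equivalently, to work in the discrete model obtained by taking $\beta = \infty$, where each $\circ_\infty$ is literally the truth-table operation). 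Under that reading, Proposition~\ref{prop:continuous-relaxation} and Lemma~\ref{lem:error-bounds} supply the needed convergence, and the equality becomes exact. I would make this scoping explicit in the proof, then note that the bijection of Theorem~\ref{thm:structural-universality} already pins down the model of $\mathbb{L}_{\mathbb{T}}$ up to reparameterization, which forbids any spurious extra functions: if the network computed a function outside $T_{\mathbb{T}}$ on boolean inputs it would fail to be a model of $\mathbb{L}_{\mathbb{T}}$, contradicting its construction. A secondary subtlety worth a remark is finiteness — since $\Sigma$ and $\Phi$ are finite (finitary theory), $T_{\mathbb{T}}$ is recursively enumerable and each of its elements has a finite parse tree, so the per-term sub-network is genuinely finite-depth; this is where the finitary hypothesis does real work.
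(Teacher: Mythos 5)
Your proposal takes essentially the same route as the paper, which only gives a two-line sketch: construct an algebra isomorphism $\Psi: T_{\mathbb{T}}(\{0,1\}) \to \text{NetworkFunctions}(\mathcal{N}_{\mathbb{T}})$ and argue that the structural constraints make it well-defined, injective, and surjective---your two inclusions via structural induction on the DAG and via parse-tree instantiation are exactly the unpacking of that bijection. Your explicit scoping of ``express'' to boolean inputs in the $\beta \to \infty$ limit addresses a real imprecision that the paper's sketch silently glosses over, so the proposal is, if anything, more careful than the original.
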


\begin{proof}[Proof Sketch]
We construct an algebra isomorphism $\Psi: T_{\mathbb{T}}(\{0,1\}) \to \text{NetworkFunctions}(\mathcal{N}_{\mathbb{T}})$. The structural constraints imposed by our construction ensure that $\Psi$ is well-defined (maps equivalent logical expressions to the same network function), injective, and surjective.
\end{proof}

\begin{remark}[Expressivity Limitations and Benefits]
This theorem formalizes a fundamental trade-off: our networks gain logical verifiability by sacrificing the ability to learn functions outside their embedded logical theory. Unlike universal approximators, they cannot learn to violate their structural constraints. For example, a network built for standard boolean logic is architecturally incapable of learning a function that violates commutativity, regardless of the training data. This result clarifies the role of our architectures: they are not universal function approximators but ``specialized reasoners''. This is a feature for applications where verifiable reasoning is paramount, as it prevents the network from learning spurious, logically-invalid correlations from the data, ensuring its reasoning remains transparent and verifiable.
\end{remark}

\begin{corollary}[Separation of Logical Theories]
Networks derived from distinct logical theories $\mathbb{T}_1 \not\cong \mathbb{T}_2$ express different classes of functions, providing a principled basis for choosing logical foundations based on required expressivity.
\end{corollary}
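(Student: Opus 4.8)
The plan is to obtain this as a contrapositive consequence of Theorem~\ref{thm:expressivity} together with the reconstruction of a Lawvere theory from its term algebra. Write $\mathcal{F}_i = \mathrm{NetworkFunctions}(\mathcal{N}_{\mathbb{T}_i})$ for $i = 1,2$, equipped with the operations induced by composing layers (sequential composition in $\textbf{Para}$, cf.\ Proposition~\ref{prop:modularity}). Suppose, toward a contradiction, that $\mathcal{F}_1$ and $\mathcal{F}_2$ coincide as structured classes of functions --- that is, there is a bijection between them compatible with substitution of one network into the inputs of another. I will show this forces $\mathbb{T}_1 \cong \mathbb{T}_2$.

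First I would invoke Theorem~\ref{thm:expressivity}: the isomorphisms $\Psi_i : T_{\mathbb{T}_i}(\{0,1\}) \xrightarrow{\ \sim\ } \mathcal{F}_i$ are algebra isomorphisms, and by inspection of the construction of $\mathcal{N}_{\mathbb{T}}$ (layers $=$ connectives, network topology $=$ categorical composition) they intertwine substitution of terms with composition of networks. Hence $\mathcal{F}_1 \cong \mathcal{F}_2$ as structured classes yields $T_{\mathbb{T}_1} \cong T_{\mathbb{T}_2}$, respecting their term operations. Second, I would use the standard fact from the theory of Lawvere theories (Lawvere duality / Birkhoff's term-algebra theorem): $\mathbb{L}_{\mathbb{T}}$ is recovered, up to isomorphism, from the system of finitely generated free algebras $T_{\mathbb{T}}(n)$, with $\mathbb{L}_{\mathbb{T}}(n,1)$ identified with the set of $n$-ary term operations and composition given by substitution. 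Applying this reconstruction to both sides of $T_{\mathbb{T}_1} \cong T_{\mathbb{T}_2}$ gives $\mathbb{L}_{\mathbb{T}_1} \cong \mathbb{L}_{\mathbb{T}_2}$. Finally, by the bijective model correspondence of Lemma~\ref{lem:consistency} and the quotient construction $\mathbb{T} \mapsto \mathbb{L}_{\mathbb{T}}$ in the proof of Theorem~\ref{thm:main-correspondence}, an isomorphism of Lawvere theories is exactly an isomorphism (definitional equivalence) of the underlying finitary theories, so $\mathbb{T}_1 \cong \mathbb{T}_2$, contradicting the hypothesis. The contrapositive is the claimed separation; the ``principled basis'' clause is then the informal reading, as each non-isomorphism class of theories carves out a genuinely distinct expressivity class.

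The main obstacle I anticipate is not the universal-algebra input but pinning down the precise sense in which two networks' ``classes of functions'' are being compared. If one compares only the underlying sets of boolean functions $\{0,1\}^n \to \{0,1\}$, then distinct theories can induce the same clone on a two-element carrier (an operation may be accidentally associative on $\{0,1\}$ while the free theory is not), and the statement as literally read would fail. The fix is to interpret the function class of $\mathcal{N}_{\mathbb{T}}$ as its \emph{term algebra} with formal generators --- exactly what Theorem~\ref{thm:expressivity} supplies --- and to carry along the substitution structure inherited from network composition; I would make this explicit in the statement and then the argument above goes through verbatim. A secondary point to verify is that $\mathbb{T}_1 \not\cong \mathbb{T}_2$ must be read as non-isomorphic \emph{as Lawvere theories} (equivalently, not definitionally equivalent as signature-with-axioms presentations), since distinct presentations of the same algebraic content already yield the same architecture up to reparameterization by Proposition~\ref{prop:uniqueness}, so no finer separation could possibly hold.
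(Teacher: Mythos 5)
Your route is the only plausible one here, and it is worth saying up front that the paper itself offers no proof of this corollary at all---it is stated bare as a consequence of Theorem~\ref{thm:expressivity}---so you are not deviating from an existing argument but supplying the missing one. Your chain (network function classes $\to$ term algebras via the isomorphisms $\Psi_i$ of Theorem~\ref{thm:expressivity}, then reconstruction of the Lawvere theory from the system of free algebras with substitution, then non-isomorphism of theories) is sound, and your insistence that the comparison be made at the level of \emph{structured} classes carrying the substitution/composition operations, across all arities, is exactly what makes the reconstruction step legitimate. The most valuable part of your proposal is the obstacle you flag: under the naive reading where ``class of functions'' means the set of realized boolean functions $\{0,1\}^n \to \{0,1\}$, the corollary is simply false---non-isomorphic finitary theories routinely generate the same clone on a two-element carrier (e.g.\ a single NAND connective versus $\{\land,\lor,\neg\}$ both yield all boolean functions), so no separation of extensional function classes can follow. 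The paper does not acknowledge this, and its informal gloss (``a principled basis for choosing logical foundations based on required expressivity'') reads as if the extensional classes were separated. Your repair---interpreting the expressible class as the term algebra with formal generators, which is what the isomorphism in Theorem~\ref{thm:expressivity} actually delivers---is the correct one, and your closing remark that $\mathbb{T}_1 \not\cong \mathbb{T}_2$ must mean non-isomorphic \emph{as Lawvere theories} (anything finer is ruled out by Proposition~\ref{prop:uniqueness}) is also right. In short: your proof is correct for the corrected statement, and it exposes that the corollary as literally written needs that correction.
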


\section{Extension of Categorical Deep Learning}

Our framework represents a fundamental extension of Categorical Deep Learning (CDL) beyond its current scope. While CDL has successfully characterized geometric constraints in neural architectures \cite{gavranovic2024categorical}, logical reasoning requires moving beyond the geometric symmetries that have been CDL's primary focus.

\begin{table}[h]
\centering
\caption{Geometric vs. Logical Categorical Deep Learning.}
\begin{tabular}{|l|l|l|}
\hline
\textbf{Feature} & \textbf{Geometric CDL (Existing)} & \textbf{Logical CDL (This Work)} \\
\hline
\textbf{Inductive Bias} & Geometric Symmetries & Semantic Rules \\
& (e.g., rotation, translation) & (e.g., logical axioms) \\
\hline
\textbf{Algebraic Structure} & Groups & Lawvere Theories \\
\hline
\textbf{Key Operations} & Group Actions (Invertible) & Logical Connectives (Irreversible) \\
\hline
\textbf{Categorical Model} & $G$-Sets / $G$-Representations & Models of Lawvere Theory \\
\hline
\textbf{Monad Type} & Group Action Monads & Free Algebra Monads \\
\hline
\textbf{Example Architecture} & Equivariant CNNs, GNNs & Differentiable Boolean Networks \\
\hline
\textbf{Verification} & Equivariance Testing & Logical Correctness Proofs \\
\hline
\textbf{Nature of Guarantee} & Equivariance Preservation & Logical Consistency \\
\hline
\textbf{Main Challenges} & High-dim. groups, representation & Constraint optimization, \\
& theory & discrete-continuous gap \\
\hline
\end{tabular}
\label{tab:cdl-comparison}
\end{table}

\subsection{Beyond Geometric Symmetries}

Traditional CDL leverages the rich theory of group actions \cite{bronstein2021geometric, cohen2016group}, but this encounters fundamental limitations when applied to logic. Logical inference often involves irreversible operations (e.g., modus ponens), which cannot be captured by group actions, as they are inherently invertible. Our framework handles this irreversibility through general categorical morphisms. Furthermore, CDL is typically descriptive, characterizing architectures that satisfy certain properties, whereas our approach is prescriptive, providing algorithms to derive architectures from logical specifications.

This extension is mathematically natural: where CDL uses group representations, we use models of Lawvere theories. Both are instances of using algebraic structure to constrain neural architectures, but logical structure requires the more general framework of universal algebra \cite{burris2012course}.

\subsection{Categorical Algebra Generalization}

Our approach maintains deep connections to the algebraic foundations of CDL. The key insight is that both geometric and logical constraints can be unified under the framework of monad algebras.

\begin{proposition}[Logical Theories as Generalized Monads]
\label{prop:logical-monads}
Every Lawvere theory $\mathbb{L}_{\mathbb{T}}$ generates a \emph{free algebra monad} $M_{\mathbb{T}}$ on the category of sets, and models of the theory correspond bijectively to algebras for this monad.
\end{proposition}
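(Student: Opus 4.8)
The plan is to recognize this proposition as the logical-theory instance of the classical correspondence between Lawvere theories and finitary monads on $\mathbf{Set}$ \cite{lawvere1963functorial, hyland2007category}, and to make that correspondence explicit for $\mathbb{L}_{\mathbb{T}}$ while checking it agrees with the notion of model used in our framework. First I would let $\mathrm{Mod}(\mathbb{L}_{\mathbb{T}})$ be the category of finite-product-preserving functors $\mathbb{L}_{\mathbb{T}} \to \mathbf{Set}$ with natural transformations as morphisms, and introduce the forgetful functor $U : \mathrm{Mod}(\mathbb{L}_{\mathbb{T}}) \to \mathbf{Set}$, $M \mapsto M(1)$, evaluating a model at the generating object. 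By Definition~\ref{def:categorical-interpretation}, a model assigns to each $n$-ary connective $c \in \Sigma$ a map $M(1)^n \to M(1)$ subject to exactly the equations forced by the axioms $\Phi$, so $\mathrm{Mod}(\mathbb{L}_{\mathbb{T}})$ is precisely the category of algebras for the finitary equational theory $(\Sigma, \Phi)$.

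Second, I would construct the free model functor $F$ left adjoint to $U$. For a set $X$, let $M_{\mathbb{T}}(X)$ be the set of $\Sigma$-terms with variables drawn from $X$, quotiented by the smallest congruence containing every instance of an axiom in $\Phi$; equivalently, for finite $X = \{x_1,\dots,x_n\}$ this is the hom-set $\mathbb{L}_{\mathbb{T}}(n,1)$, and for arbitrary $X$ it is the filtered colimit of these over the finite subsets of $X$. The unit $\eta_X : X \to M_{\mathbb{T}}(X)$ sends a variable to the class of its own term, and the multiplication $\mu_X : M_{\mathbb{T}}(M_{\mathbb{T}}(X)) \to M_{\mathbb{T}}(X)$ is term substitution. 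Associativity and unitality of the resulting monad $M_{\mathbb{T}}$ are the standard substitution identities. The monad $M_{\mathbb{T}}$ preserves filtered colimits precisely because every connective in $\Sigma$ has finite arity; this is the one place where the finitariness hypothesis on $\mathbb{T}$ is actually needed.

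Third, I would establish the bijection between models and monad algebras. The adjunction $F \dashv U$ induces the monad $M_{\mathbb{T}} = U F$, and there is a canonical comparison functor $K : \mathrm{Mod}(\mathbb{L}_{\mathbb{T}}) \to \mathbf{Set}^{M_{\mathbb{T}}}$ into the Eilenberg-Moore category, sending a model $A$ to the algebra whose structure map $M_{\mathbb{T}}(A(1)) \to A(1)$ evaluates a term under $A$. I would show $K$ is an isomorphism of categories by verifying the hypotheses of Beck's monadicity theorem for $U$: conservativity (a natural transformation of product-preserving functors that is invertible at $1$ is invertible at every $n$, since $M(n) \cong M(1)^n$), and creation of coequalizers of $U$-split pairs (the carrier of the split coequalizer inherits a unique compatible interpretation of each operation, and the axioms of $\Phi$ hold automatically there). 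Both are routine for equational theories \cite{burris2012course}. Unwinding $K$ on objects then yields exactly the claimed bijective correspondence.

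The main obstacle is bookkeeping rather than depth: one must carefully reconcile the ``commutative diagram'' presentation of $\mathbb{L}_{\mathbb{T}}$ produced by the quotient construction in step (1) of Theorem~\ref{thm:main-correspondence} with the syntactic term-congruence description of $M_{\mathbb{T}}$, confirming that the categorical equations cut out by $\Phi$ generate the same congruence that defines the free algebra; and one must match the Beck conditions against our specific definition of model. Since all of this is standard once the dictionary is fixed, I would present the construction of $M_{\mathbb{T}}$ and of the comparison functor $K$ in detail, and defer the monadicity verification to a citation, flagging only the two places where the hypotheses of the proposition are genuinely used — conservativity via $M(n) \cong M(1)^n$, and finitariness of $M_{\mathbb{T}}$ via the finite arities of the connectives.
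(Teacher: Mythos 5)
Your proposal is correct, and it reaches the conclusion by a genuinely different route than the paper. The paper's Appendix F defines the monad directly by the explicit formula $M_{\mathbb{T}}(X) = \coprod_{n \geq 0} \mathrm{Hom}_{\mathbb{L}_{\mathbb{T}}}(n,1) \times X^n$, checks the unit and associativity laws by hand via composition in $\mathbb{L}_{\mathbb{T}}$, and then simply \emph{cites} the bijection between models and monad algebras as a standard fact from categorical universal algebra. You instead derive the monad from the adjunction $F \dashv U$ between $\mathrm{Mod}(\mathbb{L}_{\mathbb{T}})$ and $\mathbf{Set}$ and prove the bijection via Beck's monadicity theorem, explicitly verifying conservativity (using $M(n) \cong M(1)^n$) and creation of coequalizers of $U$-split pairs. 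Your route buys two things the paper's does not: first, the correspondence between models and algebras is actually proved rather than outsourced; second, your description of the underlying functor --- $\mathbb{L}_{\mathbb{T}}(n,1)$ on a finite set of cardinality $n$, extended by filtered colimits --- is the correct free-algebra functor, whereas the paper's raw coproduct over-counts (e.g.\ the pairs $(\pi_1^2,(x,y))$ and $(\mathrm{id}_1,(x))$ denote the same term but are distinct elements of the coproduct), so the paper's formula really describes the monad only after the coend/quotient identification that your congruence on terms makes explicit. You also correctly isolate where finitariness enters (finite arities give a finitary monad), a point the paper's proof does not address. The paper's approach is shorter and more concrete about what $\mu$ does operationally; yours is the one you would want if the bijection itself were under scrutiny. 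The only caveat is that deferring the Beck verification to a citation leaves your write-up at roughly the same level of rigor on that step as the paper's, so if you intend the monadicity argument to be the added value, the split-coequalizer check should be written out rather than flagged.
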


\begin{proof}
The proof, which involves verifying the monad laws for the construction $M_{\mathbb{T}}(X) = \coprod_{n \geq 0} \mathbb{L}_{\mathbb{T}}(n, 1) \times X^n$, is provided in Appendix F.
\end{proof}

\textbf{Intuition:} Monads can be understood as "formula-builders" that systematically construct complex expressions from simpler components. For propositional logic with variables $X = \{p, q\}$, the monad $M_{\mathbb{T}}(X)$ acts as a formula-building machine that generates all possible logical formulas from $p, q,$ and the logical connectives. The monad's unit $\eta$ maps a variable like $p$ to the atomic formula `p', and its multiplication $\mu$ performs substitution of formulas into other formulas, enabling compositional construction of complex logical expressions.

This proposition reveals that logical neural networks are instances of the general CDL pattern: they arise as algebras for monads, just like geometric neural networks. The difference lies in the nature of the monad—geometric monads arise from group actions, logical monads from algebraic theories.

\begin{corollary}[Unification with CDL and Broader Inductive Biases]
Geometric and logical neural networks are both instances of realizing algebras for appropriately chosen monads in $\textbf{Para}$. This unifies the design of networks with strong inductive biases. The choice of monad specifies the bias: a group action monad for geometry, a Lawvere theory monad for logic. This suggests systematic approaches for encoding other biases, such as those from temporal logic, probabilistic logic, or type theory.
\end{corollary}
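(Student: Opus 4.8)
The plan is to assemble this unification from three components: the CDL characterization of geometric architectures, the results already proven in this paper for the logical case, and a uniform lifting principle for monads into $\textbf{Para}$. First I would recall from \cite{gavranovic2024categorical, cruttwell2022categorical} that an equivariant architecture for a group $G$ is precisely an algebra (internal to $\textbf{Para}$) for the monad $(-)\times G$ induced by the $G$-action — and dually a coalgebra for the comonad on the representation side; this is the geometric instance and requires no new argument. Then I would establish the logical instance by chaining Proposition~\ref{prop:logical-monads} — which identifies models of $\mathbb{L}_{\mathbb{T}}$ in $\textbf{Set}$ with algebras for the free-algebra monad $M_{\mathbb{T}}(X)=\coprod_{n\geq 0}\mathbb{L}_{\mathbb{T}}(n,1)\times X^{n}$ — with the embedding of Proposition~\ref{prop:lawvere-para-embedding} and the structural universality of Theorem~\ref{thm:structural-universality}. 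The composite shows that the class of structurally logical networks for $\mathbb{T}$ is exactly the category of $M_{\mathbb{T}}$-algebras internal to $\textbf{Para}$, with $2$-morphisms supplying the ``up to reparameterization'' equivalence.

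The unifying statement then reads: in both cases one fixes a monad $M$ on a cartesian base category $\mathcal{C}$ (sets-with-$G$-action for geometry, the free-algebra monad $M_{\mathbb{T}}$ for logic), transports $M$ to a monad $\widehat{M}$ on $\textbf{Para}$, and identifies the desired architecture class with $\widehat{M}\text{-Alg}$. I would spell out the transport step explicitly: a monad on $\mathcal{C}$ lifts along the canonical inclusion $\mathcal{C}\hookrightarrow\textbf{Para}$ (the trivial-parameter functor) provided it is \emph{strong} with respect to the cartesian structure used to build $\textbf{Para}$, and the lifted $\widehat{M}$ then acts trivially on parameter spaces. Every finitary monad on $\textbf{Set}$ — in particular $M_{\mathbb{T}}$ and the group-action monads — is canonically strong, so the lift always exists. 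This is the one place demanding genuine verification rather than citation, and I expect it to be the main obstacle: beyond checking the strength/compatibility conditions, one must confirm that strict $\widehat{M}$-algebras in the strict $2$-category $\textbf{Para}$ (rather than lax or pseudo ones) are the correct notion, which follows from strictness of $\textbf{Para}$ and of the lifted monad but should be matched against Definition~\ref{def:structurally-logical} and Theorem~\ref{thm:structural-universality}.

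For the final clause — that this recipe extends to temporal, probabilistic, and type-theoretic inductive biases — I would give a short indication rather than a proof, since the corollary only claims the extension is ``suggested.'' Temporal and modal operators are captured by (possibly many-sorted) Lawvere theories with unary generators whose axioms encode frame conditions, so they already fall under $M_{\mathbb{T}}$; probabilistic reasoning is governed by the genuinely algebraic theory of convex (barycentric) algebras, whose monad is the finitely-supported distribution monad — again finitary and strong; and type-theoretic structure is presented by essentially algebraic or generalized algebraic theories, which are no longer finitary Lawvere theories but remain monadic over a presheaf category and strong, so the same lift-and-take-algebras pattern applies after enlarging the base. I would close by noting that this uniformity is exactly the content of the corollary: the choice of monad is the sole degree of freedom, and every such choice yields, via $\textbf{Para}$, a class of architectures with the corresponding bias built in by construction.
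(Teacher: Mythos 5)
Your proposal follows the same skeleton as the paper's implicit argument: the corollary is presented as an immediate consequence of Proposition~\ref{prop:logical-monads} (Lawvere theories generate free-algebra monads whose algebras are exactly the models) combined with the CDL identification of equivariant architectures with algebras for group-action monads, and you assemble precisely these ingredients together with Theorem~\ref{thm:structural-universality} to identify the architecture class with the algebra category. Where you go beyond the paper is the transport step: Proposition~\ref{prop:logical-monads} constructs $M_{\mathbb{T}}$ as a monad on $\textbf{Set}$, yet the corollary speaks of ``monads in $\textbf{Para}$,'' and the paper never bridges that gap. Your observation that the monad must be lifted along the trivial-parameter inclusion $\textbf{Set}\hookrightarrow\textbf{Para}$, that strength with respect to the cartesian structure is the relevant condition, and that one must decide between strict, pseudo, and lax algebras in the strict 2-category before matching against Definition~\ref{def:structurally-logical}, is exactly the verification the paper omits; you are right to flag it as the one genuinely load-bearing step rather than a citation. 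Your treatment of the final clause (temporal operators as unary Lawvere generators, the distribution monad for convex algebras, essentially algebraic theories for type structure) is appropriately proportionate to the corollary's ``suggests'' phrasing. One small caution: in the CDL picture the equivariant \emph{layers} are algebra \emph{homomorphisms} while the data spaces are the algebras, so ``an equivariant architecture is an algebra'' conflates the two; the corollary's own wording is equally loose, but a careful write-up should state the unification at the level of algebra morphisms in $\textbf{Para}$ in both the geometric and logical cases.
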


\subsection{Hybrid Geometric-Logical Architectures}

This unification opens promising directions for architectures that respect both geometric and logical constraints. A \emph{distributive law} $\delta: G \circ M_{\mathbb{T}} \Rightarrow M_{\mathbb{T}} \circ G$ between a geometric monad $G$ and a logical monad $M_{\mathbb{T}}$ allows the construction of a composite monad that respects both constraint types.

This unification paves the way for multi-modal specifications where one could formally request an architecture that is simultaneously rotation-equivariant for processing 3D molecular data (a geometric constraint) and respects the axioms of a chemical logic for bond formation (a semantic constraint), deriving a single, principled architecture that embodies both physical and chemical principles.

\begin{example}[Equivariant Logical Graph Neural Network]
A GNN can be made both permutation-equivariant (a geometric constraint) and logically-constrained in its aggregation function (e.g., using $\land$ for intersection queries). The distributive law ensures that performing a logical aggregation and then permuting the graph is the same as permuting first and then aggregating. This enables principled reasoning on knowledge graphs where both relational structure and logical rules are essential.
\end{example}

This approach provides a foundation for architectures that are simultaneously geometrically equivariant and logically consistent, potentially enabling more robust reasoning in spatial and relational domains. Optimization for such hybrid architectures would involve navigating the intersection of geometric and logical constraint manifolds, a task for which the categorical foundations established here provide a systematic starting point.

\section{Scope, Limitations, and Future Work}

While our framework provides a principled foundation for logical neural networks, it is important to understand its scope, inherent limitations, and the research directions it opens. These reflect fundamental trade-offs between expressivity, tractability, and verifiability.

\subsection{Scope and Principled Trade-offs}

\textbf{Finitary Restriction:} Our results apply to \emph{finitary logical theories}—those with finite-arity connectives and a finite set of axioms. This restriction is not a limitation but a principled design choice that enables computational tractability while ensuring the corresponding Lawvere theory is well-behaved. As summarized in Table \ref{tab:logical-systems}, this scope includes a vast range of systems relevant to AI, such as propositional, modal, description, and many-valued logics, while necessarily excluding systems with infinitary axioms or quantification over infinite domains, like full first-order or higher-order logic. This restriction reflects the fundamental undecidability of such powerful logics \cite{church1936note}, making our finitary focus essential for achieving verifiable reasoning systems.

\begin{table}[h]
\centering
\caption{Logical Systems: Included vs. Excluded by Finitary Restriction.}
\begin{tabular}{|l|p{5.5cm}|p{5.5cm}|}
\hline
\textbf{Category} & \textbf{Included Systems} & \textbf{Excluded Systems} \\
\hline
\textbf{Propositional} & Classical and multi-valued logic & N/A \\
\hline
\textbf{Modal} & K, T, S4, S5, finite temporal logics & Infinitary modal logics \\
\hline
\textbf{First-Order} & Bounded quantification fragments & Full first-order logic, HOL \\
\hline
\textbf{Description} & $\mathcal{ALC}$, OWL fragments & Logics with infinite TBoxes \\
\hline
\end{tabular}
\label{tab:logical-systems}
\end{table}

\textbf{Expressivity Trade-off:} The mathematical guarantees of our framework come at the cost of restricted expressivity. As shown in Theorem \ref{thm:expressivity}, a network $\mathcal{N}_{\mathbb{T}}$ can only express functions definable within its underlying logical theory $\mathbb{T}$. Rather than viewing this as a limitation, this represents a deliberate trade-off for achieving verifiability: it makes behavior predictable and prevents the learning of logically invalid heuristics. This is analogous to the safety guarantees provided by strong typing in programming languages \cite{pierce2002types}.

\begin{remark}[Computational Complexity: A Crucial Distinction]
\textbf{Our networks perform \emph{expression evaluation}, which is efficient (linear in expression size), not \emph{satisfiability checking} (e.g., SAT), which is often computationally hard.} The framework remains tractable because it evaluates given logical expressions according to their truth conditions, rather than solving the general decision problem for the theory. For example, evaluating whether $(p \land q) \lor r$ is true given specific truth values for $p, q, r$ is straightforward, while determining whether a complex formula is satisfiable is NP-complete. This distinction is fundamental to understanding why our approach achieves both logical guarantees and computational tractability.
\end{remark}

\textbf{Positioning in Neuro-Symbolic Landscape:} Our "correctness-by-construction" approach offers a unique position in the neuro-symbolic landscape. Unlike semantic loss methods \cite{xu2018semantic, fischer2019dl2}, we provide hard guarantees, not soft encouragement. Unlike post-hoc hybrid systems \cite{manhaeve2018deepproblog}, we offer a unified, end-to-end differentiable model. And unlike specific implementations like Logic Tensor Networks, our framework is a universal, systematic construction for any finitary theory.

\subsection{Future Work: A Structured Research Program}

Our work establishes a foundation for a new class of verifiable learning systems and opens several avenues for future research, organized into three thematic areas:

\textbf{Practical Scalability:} The most immediate challenges involve scaling our methods to complex logical theories and practical applications:
\begin{enumerate}
    \item \textbf{Manifold Geometry:} Understanding the fine-grained geometry (curvature, topology) of logical constraint manifolds and their impact on optimization convergence.
    \item \textbf{Computational Algebraic Geometry:} Leveraging tools like Gröbner bases to develop more efficient projection and retraction operators for complex logical axioms.
    \item \textbf{Hybrid Manifold Optimization:} Developing algorithms to optimize effectively on intersections of geometric and logical constraint manifolds for hybrid architectures.
\end{enumerate}

\textbf{Semantic Grounding and Integration:} Bridging the gap between symbolic manipulation and perceptual understanding:
\begin{enumerate}
    \item \textbf{Symbol Grounding:} Interfacing logically-structured modules with perceptual front-ends (vision, language models) to learn meaningful logical representations from raw data.
    \item \textbf{Multi-Modal Integration:} Developing principled methods for combining logical reasoning with statistical learning in perception-reasoning pipelines.
\end{enumerate}

\textbf{Theoretical Extensions:} Expanding the scope and power of the categorical framework:
\begin{enumerate}
    \item \textbf{Probabilistic and Temporal Logic:} Extending the framework to handle uncertainty and temporal reasoning through appropriate categorical constructions.
    \item \textbf{Higher-Order Fragments:} Investigating bounded fragments of higher-order logic that remain finitary while increasing expressivity.
\end{enumerate}

Addressing these questions lies at the intersection of categorical logic, differential geometry, and optimization theory, and is essential for realizing the full potential of verifiable AI systems.

\section{Conclusion}

By providing a constructive and universal mapping from logic to architecture, our framework fundamentally transforms the design of reasoning systems. The central question shifts from an empirical search—"How can we train this network to be logical?"—to a formal specification problem: "What is the correct logic that this AI system should embody?"

Our contributions enable this transformation through four key advances. First, we provided an explicit, algorithmic bridge from any finitary logical theory to a neural network architecture, making logical specification directly actionable for system design. Second, our universality result (Theorem \ref{thm:structural-universality}) proves that this construction is canonical, capturing all possible ways to build structurally logical networks. Third, we extended Categorical Deep Learning beyond geometric symmetries to semantic rules, unifying both under the algebraic framework of monad algebras in the category of parametric maps. Fourth, we moved beyond pure theory by introducing a concrete methodology for training these highly-constrained networks through Riemannian optimization on logical constraint manifolds, transforming the abstract construction into a practical, trainable methodology.

Our framework makes a principled trade-off: it sacrifices the universal approximation power of standard networks for the far stronger guarantee of logical correctness by construction. The resulting networks are not merely trained to be rational, but are provably rational by design within their specified logical domains. While significant challenges remain, particularly in scaling the optimization and integrating with symbol grounding systems, our work provides both the theoretical charter and the practical foundations for building the next generation of AI systems: not just powerful pattern recognizers, but engines of verifiable reason.

\newpage
\appendix

\section*{Appendix: Technical Proofs and Algorithmic Details}

This appendix provides the detailed mathematical proofs and algorithmic specifications that underpin the results presented in the main text. Each section is designed to be as self-contained as possible, beginning with a restatement of the proposition or theorem being proved, followed by a high-level proof strategy and then the formal details. We assume familiarity with basic concepts from category theory.

\section{Proof of Proposition 2.1 (Lawvere Theory Embedding)}
\label{app:prop-lawvere-para-embedding}

\begin{proposition}
Every Lawvere theory $\mathbb{L}$ admits a canonical functor $\mathcal{E}: \mathbb{L} \to \textbf{Para}$ that preserves finite products.
\end{proposition}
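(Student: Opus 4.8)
The plan is to realize $\mathcal{E}$ as a composite $\mathbb{L}\xrightarrow{\,M\,}\textbf{Set}\xrightarrow{\,\iota\,}\textbf{Para}$, which splits the statement into two transparent pieces. The second factor $\iota$ is the canonical trivial-parameter inclusion of $\textbf{Set}$ into $\textbf{Para}$: it is the identity on objects, it sends a function $g\colon A\to B$ to the $1$-morphism $(\{\ast\},\hat g)$ with $\hat g(\ast,a)=g(a)$, and it is trivial on $2$-morphisms since $\textbf{Set}$ carries only identity $2$-cells. The first factor $M$ is a finite-product-preserving functor $\mathbb{L}\to\textbf{Set}$ — equivalently, a set-model of the theory $\mathbb{L}$ — and the canonical choice requiring no arbitrary data is the free model on one generator, $M=\mathbb{L}(1,-)$, which sends $n\mapsto\mathbb{L}(1,n)$ and a morphism $h\colon n\to n'$ to postcomposition by $h$. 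This $M$ preserves finite products automatically, because maps into a product in $\mathbb{L}$ factor uniquely through the projections, giving $\mathbb{L}(1,m\times n)\cong\mathbb{L}(1,m)\times\mathbb{L}(1,n)$ and $\mathbb{L}(1,0)\cong\{\ast\}$; in particular $M(n)\cong M(1)^n$. Setting $\mathcal{E}:=\iota\circ M$, the generating morphisms of $\mathbb{L}$ (the operations of the theory) and the structural morphisms (projections, diagonals, and the pairings witnessing products) are all carried to trivial-parameter maps implementing the corresponding operations in the model $M$ — precisely the case split the appendix proof is expected to spell out.

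What is then left is bookkeeping. Functoriality of $\mathcal{E}$ reduces to functoriality of $M$ plus preservation of composition and identities by $\iota$: the $\textbf{Para}$-composite $\iota(g)\,;\,\iota(f)$ is $(\{\ast\}\times\{\ast\},\,a\mapsto f(g(a)))$, which agrees with $\iota(f\circ g)=(\{\ast\},\,a\mapsto f(g(a)))$ under the canonical identification $\{\ast\}\times\{\ast\}\cong\{\ast\}$, and the same identifications make $\iota$ a strong monoidal $2$-functor for the monoidal structure that $\textbf{Para}$ inherits from $(\textbf{Set},\times)$. Preservation of finite products by $\mathcal{E}$ is then inherited: $\mathcal{E}(m\times n)=\iota(M(m)\times M(n))$, which under those identifications is $\iota(M(m))\otimes\iota(M(n))=\mathcal{E}(m)\otimes\mathcal{E}(n)$, while $\mathcal{E}(0)$ is the bi-terminal one-point object; since $\textbf{Para}$'s finite products are computed by the $\textbf{Set}$-product on objects, this is exactly the required preservation. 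If one wants $\mathcal{E}$ to be faithful, so as to deserve the word ``embedding'', it is enough to replace $M=\mathbb{L}(1,-)$ by the separating family $M=\prod_{a\in\mathbb{N}}\mathbb{L}(a,-)$: this is still product-preserving, and it distinguishes any two parallel morphisms $h\neq h'\colon a\to b$ by their differing action on $\mathrm{id}_a\in\mathbb{L}(a,a)$. Naturality of the whole construction in $\mathbb{L}$ is immediate, since each ingredient is.

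The step I expect to require the most care is not any of these computations but the coherence bookkeeping forced by treating $\textbf{Para}$ as a \emph{strict} $2$-category: its compositional unit and monoidal unit are (copies of) the one-point set, and an iterated product $\{\ast\}\times\cdots\times\{\ast\}$ equals $\{\ast\}$ only up to the evident canonical isomorphism, so confirming that $\iota$ — and hence $\mathcal{E}$ — is a genuine (strict, or at worst strong) product-preserving $2$-functor means either working systematically modulo these structural isomorphisms or invoking a strictification that collapses each $\{\ast\}^k$ to $\{\ast\}$, which is harmless since every such object is terminal and the isomorphisms among them are unique. A smaller preliminary point is to fix the meaning of ``preserves finite products'' in the target: $\textbf{Para}$ has no strict $1$-categorical products, but it carries the symmetric monoidal structure $\times$ inherited from $(\textbf{Set},\times)$ — the same one written $\mathcal{N}_1\times\mathcal{N}_2$ for parallel composition elsewhere in the paper — with respect to which it has bicategorical finite products, and under that reading the composite construction delivers the statement with no difficulty beyond the coherence just noted.
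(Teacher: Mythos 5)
Your proof is essentially correct, but it takes a genuinely different route from the paper's. The paper defines $\mathcal{E}$ directly on the generators of $\mathbb{L}$: objects $n$ are sent to $\mathbb{R}^n$, each generating operation $\omega: n \to 1$ is sent to a $1$-morphism $(P_\omega, f_\omega)$ with a \emph{nontrivial, designated parameter space} $P_\omega$ and a differentiable relaxation $f_\omega$, structural morphisms go to parameter-free maps, and functoriality is checked by hand on composites and identities. You instead factor $\mathcal{E}$ as $\iota \circ M$ where $M = \mathbb{L}(1,-)$ is the representable (free-model-on-one-generator) functor into $\textbf{Set}$ and $\iota$ is the trivial-parameter inclusion $\textbf{Set} \hookrightarrow \textbf{Para}$. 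Your decomposition buys genuine canonicity (no arbitrary choices of $P_\omega$ or $f_\omega$), an automatic product-preservation argument (hom-functors preserve limits), and an honest treatment of the faithfulness needed to call $\mathcal{E}$ an ``embedding'' and of the strictness/coherence issues around $\{\ast\}\times\{\ast\}\cong\{\ast\}$ that the paper silently elides. What it loses is exactly what the paper's choice is designed to provide downstream: in your construction every $1$-morphism in the image has the singleton parameter space, so the resulting model of $\mathbb{L}$ in $\textbf{Para}$ carries no learnable parameters and its underlying objects are $\mathbb{L}(1,1)^n$ rather than $\mathbb{R}^n$; it therefore cannot serve as the seed for the trainable neural architectures built in Theorem~\ref{thm:main-correspondence} and Section~3, which depend on the generators landing on genuinely parametric, differentiable layers. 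Both arguments prove the literal statement; the paper's is the one the rest of the framework actually uses, at the cost of the word ``canonical'' doing less work than in your version.
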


\noindent\textbf{Proof Strategy.} We construct the functor $\mathcal{E}$ by defining its action on objects and morphisms of $\mathbb{L}$. The action on objects maps arities to vector spaces. The action on morphisms maps the generating operations of the theory to parameterized functions and structural morphisms (projections, diagonals) to fixed, non-parametric functions. We then verify that this construction preserves identity and composition, thus satisfying the definition of a functor.

\begin{proof}
We define the functor $\mathcal{E}: \mathbb{L} \to \textbf{Para}$ as follows:

\begin{enumerate}
    \item \textbf{Action on Objects:} For each object $n \in \text{Ob}(\mathbb{L})$, we define $\mathcal{E}(n) = \mathbb{R}^n$. This mapping respects the product structure of a Lawvere theory, as the coproduct of arities $m \oplus n = m+n$ in $\mathbb{L}$ maps to $\mathbb{R}^{m+n}$, which is canonically isomorphic to $\mathbb{R}^m \times \mathbb{R}^n = \mathcal{E}(m) \times \mathcal{E}(n)$ in \textbf{Set}, the category of objects for \textbf{Para}.

    \item \textbf{Action on Morphisms:} We define the action on morphisms by specifying its behavior on generating and structural morphisms.
    \begin{itemize}
        \item \textbf{Generating Operations:} For each generating operation $\omega: n \to 1$ in $\mathbb{L}$, which represents an $n$-ary operation of the algebraic theory, we define its image as a 1-morphism in \textbf{Para}:
        \[\mathcal{E}(\omega) := (P_\omega, f_\omega: P_\omega \times \mathbb{R}^n \to \mathbb{R})\]
        where $P_\omega$ is a designated parameter space (e.g., a subset of $\mathbb{R}^k$) and $f_\omega$ is a differentiable function representing the continuous relaxation of the operation $\omega$.
        \item \textbf{Structural Morphisms:} The structural morphisms inherent to any Lawvere theory (projections, diagonals, permutations) are mapped to fixed, parameter-free 1-morphisms (i.e., with a singleton parameter space $\{\star\}$):
        \begin{itemize}
            \item Projections $\pi_i^n: n \to 1$ map to $\mathcal{E}(\pi_i^n) := (\{\star\}, \text{proj}_i: \{\star\} \times \mathbb{R}^n \to \mathbb{R})$, where $\text{proj}_i((\star), (x_1, \dots, x_n)) = x_i$.
            \item Diagonals $\Delta^n: 1 \to n$ map to $\mathcal{E}(\Delta^n) := (\{\star\}, \text{diag}: \{\star\} \times \mathbb{R}^1 \to \mathbb{R}^n)$, where $\text{diag}((\star), x) = (x, \dots, x)$.
        \end{itemize}
        \item \textbf{General Morphisms:} Any morphism $\alpha: n \to m$ in a Lawvere theory is uniquely determined by its $m$ components, $\alpha = \langle \alpha_1, \dots, \alpha_m \rangle$, where each $\alpha_i: n \to 1$ is a morphism. This is a direct consequence of the universal property of the product object $m$. We define $\mathcal{E}(\alpha)$ using the product in \textbf{Para}:
        \[\mathcal{E}(\alpha) := \langle \mathcal{E}(\alpha_1), \dots, \mathcal{E}(\alpha_m) \rangle\]
        If $\mathcal{E}(\alpha_i) = (P_i, f_i)$, their product is the 1-morphism $(\prod_i P_i, \langle f_1, \dots, f_m \rangle)$.
    \end{itemize}

    \item \textbf{Verification of Functoriality:} We must show that $\mathcal{E}$ preserves identity and composition.
    \begin{itemize}
        \item \textbf{Identity Preservation:} The identity morphism $\text{id}_n: n \to n$ in $\mathbb{L}$ is given by $\langle \pi_1^n, \dots, \pi_n^n \rangle$. Applying $\mathcal{E}$ yields:
        \[\mathcal{E}(\text{id}_n) = \langle \mathcal{E}(\pi_1^n), \dots, \mathcal{E}(\pi_n^n) \rangle = \langle (\{\star\}, \text{proj}_1), \dots, (\{\star\}, \text{proj}_n) \rangle\]
        The product of these is $(\{\star\}^n, \langle \text{proj}_1, \dots, \text{proj}_n \rangle)$, which simplifies to $(\{\star\}, \text{id}_{\mathbb{R}^n})$. This is precisely the identity 1-morphism on the object $\mathcal{E}(n) = \mathbb{R}^n$ in \textbf{Para}.
        \item \textbf{Composition Preservation:} Let $\alpha: n \to m$ and $\beta: m \to k$ be morphisms in $\mathbb{L}$. Let $\mathcal{E}(\alpha)=(P,f)$ and $\mathcal{E}(\beta)=(Q,g)$. The composition in \textbf{Para} is $\mathcal{E}(\beta) \circ \mathcal{E}(\alpha) = (P \times Q, h)$ where $h((p,q), x) = g(q, f(p,x))$. The composition $\beta \circ \alpha$ in $\mathbb{L}$ is a morphism from $n \to k$. By the definition of composition in a Lawvere theory, the $j$-th component of this composite, $(\beta \circ \alpha)_j: n \to 1$, is obtained by substituting the outputs of $\alpha$ into the $j$-th component of $\beta$. Applying our functor $\mathcal{E}$ to $\beta \circ \alpha$ yields a parametric map that computes $g(q, f(p,x))$ with parameters $(p,q)$, which is exactly the \textbf{Para}-composition. The correspondence holds due to the compositional definition of morphisms in $\mathbb{L}$ and our construction.
    \end{itemize}
\end{enumerate}
Thus, $\mathcal{E}$ is a product-preserving functor.
\end{proof}

\section{Proof of Lemma 2.3 (Consistency Equivalence)}
\label{app:lem-consistency}

\begin{lemma}
For a finitary logical theory $\mathbb{T}$ with semantic completeness, the set of categorical equations derived from its axioms is consistent if and only if $\mathbb{T}$ is logically consistent.
\end{lemma}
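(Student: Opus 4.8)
The plan is to prove the biconditional by establishing a tight correspondence between the semantic models of $\mathbb{T}$ and the functorial models of its Lawvere theory $\mathbb{L}_{\mathbb{T}}$, and then invoking a completeness/soundness argument to transfer consistency across the bridge. First I would set up notation precisely: recall that $\mathbb{L}_{\mathbb{T}}$ is the free category on the signature $\Sigma$ quotiented by the categorical equations induced by $\Phi$ (as in step 1 of Theorem~\ref{thm:main-correspondence}), and that a \emph{model} of $\mathbb{L}_{\mathbb{T}}$ in $\textbf{Set}$ is a finite-product-preserving functor $M: \mathbb{L}_{\mathbb{T}} \to \textbf{Set}$, whose underlying set $M(1)$ carries operations $M(\omega)$ for each connective $\omega \in \Sigma$ satisfying exactly the equations of $\Phi$. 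The key structural fact is that such functors are the same thing as $\Sigma$-algebras satisfying $\Phi$ — i.e., algebraic models of $\mathbb{T}$ in the equational sense — which is the standard Lawvere-theory/algebra correspondence that I would cite (and which is essentially Proposition~\ref{prop:logical-monads} restricted to $\textbf{Set}$).

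Next I would prove the two directions. For the forward direction (categorical consistency $\Rightarrow$ logical consistency), I argue contrapositively: if $\mathbb{T}$ is logically \emph{in}consistent, then by semantic completeness every formula is derivable, so in particular $\top \equiv \bot$ is a theorem; this forces the categorical equation identifying the corresponding morphisms $\top, \bot: 0 \to 1$ in $\mathbb{L}_{\mathbb{T}}$, which collapses $\mathbb{L}_{\mathbb{T}}$ to the terminal (trivial) category and hence the equational system is degenerate/inconsistent in the sense that it admits only the one-point model. For the reverse direction (logical consistency $\Rightarrow$ categorical consistency), I use semantic completeness the other way: a consistent $\mathbb{T}$ has a genuine model $\mathcal{A}$ (a nontrivial $\Sigma$-structure validating $\Phi$); by the algebra/functor correspondence above, $\mathcal{A}$ induces a product-preserving functor $M_{\mathcal{A}}: \mathbb{L}_{\mathbb{T}} \to \textbf{Set}$, which witnesses that the categorical equations have a model and are therefore consistent. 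I would make sure the notion of ``consistent'' for the categorical equations is pinned down as ``the quotient category $\mathbb{L}_{\mathbb{T}}$ is nontrivial / admits a nontrivial functorial model,'' so that both directions speak about the same object.

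The main obstacle I anticipate is the precise handling of the \emph{semantic completeness} hypothesis and exactly what ``logically consistent'' means for an arbitrary finitary theory $\mathbb{T}$ — the statement is cleanest for equational or propositional theories where truth-table semantics and the term algebra coincide, but for modal or many-valued fragments one must be careful that the ``categorical equations derived from the axioms'' faithfully capture the consequence relation. Concretely, the subtle point is showing that the quotient by $\Phi$ does not accidentally identify more morphisms than the logic does (faithfulness of the encoding, as asserted in Section~3.1 for $\mathbb{L}_{\text{Bool}}$), which is exactly where completeness is needed: soundness gives that provably-equivalent expressions map to equal morphisms, and completeness gives the converse, so that the collapse of $\mathbb{L}_{\mathbb{T}}$ happens \emph{if and only if} the logic itself collapses. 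I would isolate this as a lemma — ``$\llbracket E_1 \rrbracket = \llbracket E_2 \rrbracket$ in $\mathbb{L}_{\mathbb{T}}$ iff $\mathbb{T} \vdash E_1 \equiv E_2$'' — prove it by induction on derivations in one direction and by exhibiting a separating model (via completeness) in the other, and then the consistency biconditional follows immediately by instantiating $E_1 \equiv E_2$ with $\top \equiv \bot$.
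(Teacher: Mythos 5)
Your core strategy---establishing the correspondence between set-theoretic models of $\mathbb{T}$ and finite-product-preserving functors $\mathbb{L}_{\mathbb{T}} \to \textbf{Set}$, then transferring consistency across that bridge---is exactly the paper's strategy, and your reverse direction (consistent $\mathbb{T}$ yields a model $\mathcal{A}$, which induces a functor $M_{\mathcal{A}}$ witnessing categorical consistency) matches the paper's Part~1 essentially verbatim. Where you diverge is the other direction. The paper argues it \emph{directly}: given any product-preserving functor $M: \mathbb{L}_{\mathbb{T}} \to \textbf{Set}$, it reads off a logical model with domain $D := M(1)$ and interpretations $M(c_{\mathbb{L}_{\mathbb{T}}})$, and checks that functoriality forces every axiom to hold. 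You instead argue \emph{contrapositively}: inconsistency of $\mathbb{T}$ derives $\top \equiv \bot$, which collapses $\mathbb{L}_{\mathbb{T}}$ to a trivial category admitting only the one-point model. Your version is fine when it applies, but it quietly assumes the signature contains (or can define) the constants $\top, \bot$ and that the deductive system satisfies explosion---neither of which is guaranteed by the paper's bare definition of a finitary theory as a signature plus equational axioms. The paper's direct construction sidesteps this entirely and is the more general argument; if you keep the contrapositive route you should either add these hypotheses or replace ``$\top \equiv \bot$'' with the more robust ``every equation between terms becomes derivable.'' On the other hand, your proposed auxiliary lemma ($\llbracket E_1 \rrbracket = \llbracket E_2 \rrbracket$ in $\mathbb{L}_{\mathbb{T}}$ iff $\mathbb{T} \vdash E_1 \equiv E_2$) is a genuine strengthening the paper never states or proves: it pins down faithfulness of the encoding and makes explicit where semantic completeness actually does work, whereas the paper invokes completeness only loosely to guarantee a set-theoretic model exists. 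That lemma is more than the consistency biconditional needs, but it is the right thing to isolate if one wants the quotient construction of Algorithm~1 to be trustworthy, and proving it (soundness by induction on derivations, the converse by exhibiting separating models) would be a worthwhile addition.
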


\noindent\textbf{Proof Strategy.} We prove the two directions of the biconditional by constructing explicit mappings between the models of the logical theory and the models of its corresponding Lawvere theory. The assumption of semantic completeness for $\mathbb{T}$ (i.e., syntax matches semantics, $\vdash \phi \iff \models \phi$) is the crucial link that guarantees this correspondence.

\begin{proof}
\textbf{Part 1: ($\mathbb{T}$ is logically consistent $\implies$ Categorical equations are consistent).}
\begin{enumerate}
    \item Assume $\mathbb{T}$ is logically consistent. By definition, this means there exists at least one model for $\mathbb{T}$. By the assumption of semantic completeness, this model can be taken to be a set-theoretic model.
    \item Let $\mathcal{M}$ be such a model. It consists of a domain set $D$ and an interpretation function that assigns a function $\mathcal{M}(c): D^k \to D$ to each $k$-ary connective $c \in \Sigma$, such that all axioms in $\Phi$ are satisfied.
    \item We construct a categorical model, which is a product-preserving functor $M: \mathbb{L}_{\mathbb{T}} \to \textbf{Set}$.
    \begin{itemize}
        \item \textbf{On Objects:} For an object $n \in \text{Ob}(\mathbb{L}_{\mathbb{T}})$, set $M(n) := D^n$. This preserves products since $M(m+n) = D^{m+n} \cong D^m \times D^n = M(m) \times M(n)$.
        \item \textbf{On Morphisms:} For a generating morphism $c_{\mathbb{L}_{\mathbb{T}}}: k \to 1$ in the Lawvere theory, set its image $M(c_{\mathbb{L}_{\mathbb{T}}})$ to be the function $\mathcal{M}(c): D^k \to D$.
    \end{itemize}
    \item Since $\mathcal{M}$ is a valid logical model, it satisfies all axioms in $\Phi$. For an axiom $E_1 \equiv E_2$, this means the functions interpreting $E_1$ and $E_2$ are identical. This implies that the functor $M$ maps the corresponding morphisms $\llbracket E_1 \rrbracket$ and $\llbracket E_2 \rrbracket$ to the same function in \textbf{Set}, thereby respecting the categorical equation $\llbracket E_1 \rrbracket = \llbracket E_2 \rrbracket$.
    \item Therefore, $M$ is a valid product-preserving functor, meaning a categorical model for $\mathbb{L}_{\mathbb{T}}$ exists. The existence of a model implies that the defining equations are consistent.
\end{enumerate}

\textbf{Part 2: (Categorical equations are consistent $\implies \mathbb{T}$ is logically consistent).}
\begin{enumerate}
    \item Assume the categorical equations defining $\mathbb{L}_{\mathbb{T}}$ are consistent. By definition, this means there exists a model, i.e., a product-preserving functor $M: \mathbb{L}_{\mathbb{T}} \to \textbf{Set}$.
    \item We construct a logical model $\mathcal{M}$ for the theory $\mathbb{T}$.
    \begin{itemize}
        \item \textbf{Domain:} Let the domain of $\mathcal{M}$ be the set $D := M(1)$.
        \item \textbf{Interpretation:} For each $k$-ary connective $c \in \Sigma$, define its interpretation $\mathcal{M}(c)$ to be the function $M(c_{\mathbb{L}_{\mathbb{T}}}): M(k) \to M(1)$. Since $M$ preserves products, $M(k) \cong (M(1))^k = D^k$, so $\mathcal{M}(c)$ is a function from $D^k$ to $D$.
    \end{itemize}
    \item We must show that $\mathcal{M}$ satisfies the axioms in $\Phi$. Consider an axiom $E_1 \equiv E_2$. In the Lawvere theory $\mathbb{L}_{\mathbb{T}}$, this corresponds to the categorical equation $\llbracket E_1 \rrbracket = \llbracket E_2 \rrbracket$.
    \item Since $M$ is a functor, it preserves this equality: $M(\llbracket E_1 \rrbracket) = M(\llbracket E_2 \rrbracket)$. By our construction of $\mathcal{M}$, this equality of functions in \textbf{Set} means that the interpretation of $E_1$ under $\mathcal{M}$ is identical to the interpretation of $E_2$.
    \item This holds for all axioms in $\Phi$. Therefore, $\mathcal{M}$ is a valid logical model for $\mathbb{T}$. The existence of such a model proves that $\mathbb{T}$ is logically consistent.
\end{enumerate}
\end{proof}

\section{Constructive Algorithms (for Theorem 2.4)}
\label{app:thm-main-correspondence}

The following three algorithms form the constructive core of our framework. Together, they provide a deterministic procedure for transforming a high-level logical specification into a concrete, trainable neural network whose architecture enforces the specified logic.

\begin{algorithm}[h!]
\caption{Logic to Lawvere Theory Construction}
\begin{algorithmic}[1]
\STATE \textbf{Input:} Finitary logical theory $\mathbb{T} = (\Sigma, \Phi)$, where $\Sigma$ is a signature of connectives and $\Phi$ is a set of axioms.
\STATE \textbf{Output:} The corresponding Lawvere theory $\mathbb{L}_{\mathbb{T}}$.
\STATE
\STATE \textbf{Free Theory Construction:} Construct the free category with finite products, $\mathbb{F}(\Sigma)$, generated by the signature $\Sigma$. The objects are natural numbers, and morphisms are all possible well-formed expressions built from the connectives in $\Sigma$ and structural morphisms (projections, diagonals). \COMMENT{Create the 'raw material' of all possible expressions.}
\STATE \textbf{Axiom Translation:} For each axiom $\phi \in \Phi$ of the form $E_1 \equiv E_2$ (where $E_1, E_2$ are expressions with $n$ free variables), interpret both sides as morphisms $n \to 1$ in $\mathbb{F}(\Sigma)$ using Definition \ref{def:categorical-interpretation}. This yields a set of categorical equations $\{\llbracket E_1 \rrbracket = \llbracket E_2 \rrbracket\}_{\phi \in \Phi}$. \COMMENT{Turn logical laws into algebraic identities.}
\STATE \textbf{Quotient Construction:} Define $\mathbb{L}_{\mathbb{T}}$ as the quotient category $\mathbb{F}(\Sigma) / \sim$, where $\sim$ is the congruence relation generated by the set of categorical equations from the previous step. In this quotient, logically equivalent expressions become equal morphisms. \COMMENT{'Bake' the laws into the structure by identifying equivalent expressions.}
\STATE \textbf{Return} $\mathbb{L}_{\mathbb{T}}$
\end{algorithmic}
\end{algorithm}

\begin{algorithm}[h!]
\caption{Lawvere Theory to Parametric Model in \textbf{Para}}
\begin{algorithmic}[1]
\STATE \textbf{Input:} Lawvere theory $\mathbb{L}_{\mathbb{T}}$.
\STATE \textbf{Output:} A parametric model (functor) $M: \mathbb{L}_{\mathbb{T}} \to \textbf{Para}$.
\STATE
\STATE \textbf{Object Assignment:} Define the action of the model on objects by setting $M(n) = \mathbb{R}^n$ for each object $n \in \text{Ob}(\mathbb{L}_{\mathbb{T}})$. \COMMENT{Map abstract arities to concrete vector spaces.}
\STATE \textbf{Operation Implementation:} For each generating morphism $c_{\mathbb{L}_{\mathbb{T}}}: n \to 1$ in the theory, assign a parametric map $M(c_{\mathbb{L}_{\mathbb{T}}}) := (P_c, f_c)$, where $f_c: P_c \times \mathbb{R}^n \to \mathbb{R}$ is a differentiable function. \COMMENT{Realize each logical connective as a differentiable, parameterized function.}
\STATE \textbf{Constraint Extraction:} For each axiom $E_1 \equiv E_2$ that defined $\mathbb{L}_{\mathbb{T}}$, the equation $M(\llbracket E_1 \rrbracket) = M(\llbracket E_2 \rrbracket)$ must hold. This equality of parametric maps imposes an algebraic constraint equation $G(W) = 0$ on the collection $W$ of all parameters from the set $\{P_c\}$. \COMMENT{Translate algebraic identities into equations that the parameters must satisfy.}
\STATE \textbf{Return} The model $M$ (i.e., the collection of parametric maps and their associated constraints).
\end{algorithmic}
\end{algorithm}

\begin{algorithm}[h!]
\caption{Parametric Model to Neural Architecture}
\begin{algorithmic}[1]
\STATE \textbf{Input:} A parametric model $M \in \text{Mod}(\mathbb{L}_{\mathbb{T}}, \textbf{Para})$.
\STATE \textbf{Output:} A trainable neural network $\mathcal{N}_{\mathbb{T}}$.
\STATE
\STATE \textbf{Layer Construction:} For each distinct parametric morphism $(P_c, f_c) = M(c_{\mathbb{L}_{\mathbb{T}}})$ in the model, instantiate a corresponding neural network layer whose forward pass computes $f_c$ and whose trainable weights are the parameters in $P_c$. \COMMENT{Create a layer for each logical operation.}
\STATE \textbf{Network Topology:} To evaluate a complex logical expression $E$, compose the layers according to the structure of $E$. The categorical composition in $\mathbb{L}_{\mathbb{T}}$ translates directly into a directed acyclic graph of layers. \COMMENT{Wire the layers together to mirror the logical formula.}
\STATE \textbf{Constraint Enforcement:} The parameter constraints $G(W)=0$ extracted in Algorithm 6 define the logical constraint manifold $\mathcal{M}_{\mathbb{L}}$. Train the network's parameters $W$ using an optimization algorithm that respects this manifold, such as Riemannian gradient descent (Algorithm 4), ensuring that the network remains logically valid throughout training. \COMMENT{Train the network on the manifold of logically valid models.}
\STATE \textbf{Return} $\mathcal{N}_{\mathbb{T}}$
\end{algorithmic}
\end{algorithm}

\section{Proof of Lemma 3.2 (Finite-$\beta$ Error Bounds)}
\label{app:lem-error-bounds}

\begin{lemma}
For $x, y \in \{0, 1\}$ and $\beta > 0$, the approximation errors of the continuous relaxations are bounded exponentially:
\[|\land_{\beta}(x, y) - \land(x, y)| \leq \frac{1}{1 + e^{0.5\beta}}, \quad |\neg_{\beta}(x) - \neg(x)| \leq \frac{1}{1 + e^{\beta}}\]
\end{lemma}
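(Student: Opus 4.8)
The plan is to reduce the two inequalities to a finite case analysis over the boolean inputs $x,y \in \{0,1\}$, since the claim concerns only such inputs. The single tool I would use throughout is a pair of elementary facts about the logistic sigmoid $\sigma(z) = (1+e^{-z})^{-1}$: it is strictly increasing, and it obeys the reflection identity $1 - \sigma(z) = \sigma(-z)$, which rewrites every ``distance from $0$ or $1$'' as one sigmoid value in the closed form $\tfrac{1}{1+e^{t}}$. The argument is then: for each boolean input, read off the argument of $\sigma$ from the relaxation in Proposition~\ref{prop:continuous-relaxation}, compare $\sigma$ of that argument with the exact boolean output, and simplify.

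First I would treat conjunction. Using $\land_\beta(x,y) = \sigma(\beta(x+y-1.5))$, I tabulate the argument $\beta(x+y-1.5)$ over the four inputs: it is $0.5\beta$ at $(1,1)$, it is $-0.5\beta$ at $(1,0)$ and $(0,1)$, and it is $-1.5\beta$ at $(0,0)$. Since $\land$ equals $1$ only at $(1,1)$, the error at $(1,1)$ is $1-\sigma(0.5\beta) = \sigma(-0.5\beta) = \tfrac{1}{1+e^{0.5\beta}}$; at $(1,0)$ and $(0,1)$ it is $\sigma(-0.5\beta) - 0 = \tfrac{1}{1+e^{0.5\beta}}$; and at $(0,0)$ it is $\sigma(-1.5\beta) = \tfrac{1}{1+e^{1.5\beta}}$. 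Monotonicity of $\sigma$ gives $\tfrac{1}{1+e^{1.5\beta}} \le \tfrac{1}{1+e^{0.5\beta}}$ for $\beta > 0$, so the maximum over all boolean inputs is exactly $\tfrac{1}{1+e^{0.5\beta}}$, which is the first bound (and, as a byproduct, shows it is attained, hence tight). Negation is handled the same way: $\neg_\beta(x) = \sigma(\beta(1-2x))$ has argument $\beta$ at $x=0$ and $-\beta$ at $x=1$, with exact values $1$ and $0$ respectively, so the error is $1-\sigma(\beta) = \sigma(-\beta) = \tfrac{1}{1+e^{\beta}}$ in the first case and $\sigma(-\beta) = \tfrac{1}{1+e^{\beta}}$ in the second, giving the second bound with equality in both cases.

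There is no genuine obstacle here — the proof is purely mechanical. The only points that require a little care are: using the reflection identity $1-\sigma(z) = \sigma(-z)$ consistently (rather than conflating $1-\sigma(z)$ with $\sigma(z)$) when rewriting each error term, and noticing that the $(0,0)$ case for conjunction is strictly dominated by the $(1,0)$/$(0,1)$ cases so it plays no role in the stated bound. Recording which inputs are extremal also makes the constant $\alpha \ge 0.5$ appearing in Theorem~\ref{thm:approximation-validity} transparent: it is precisely the margin $|x+y-1.5| = 0.5$ at the worst-case conjunction inputs.
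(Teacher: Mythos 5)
Your proposal is correct and matches the paper's own proof essentially step for step: a case analysis over boolean inputs, the reflection identity $1-\sigma(z)=\sigma(-z)$ to put each error in the form $\tfrac{1}{1+e^{t}}$, and the observation that the $(0,0)$ conjunction case is dominated by the others. The added remarks on tightness and the origin of $\alpha \ge 0.5$ are a nice bonus but do not change the argument.
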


\begin{proof}
We perform a case-wise analysis of the absolute error $|\circ_\beta(\mathbf{x}) - \circ(\mathbf{x})|$ for each operation over all combinations of boolean inputs $\mathbf{x} \in \{0, 1\}^n$. We use the identity $\sigma(-z) = 1 - \sigma(z)$.

\noindent\textbf{1. Analysis for Conjunction ($\land_\beta$):}
Let $f(x,y) = \land_\beta(x,y) = \sigma(\beta(x+y-1.5))$.
\begin{itemize}
    \item \textbf{Case (x,y) = (0,0):} True value is 0. Network output is $f(0,0) = \sigma(-1.5\beta)$. \\
    Error is $|\sigma(-1.5\beta) - 0| = \frac{1}{1 + e^{1.5\beta}}$.
    \item \textbf{Case (x,y) = (0,1):} True value is 0. Network output is $f(0,1) = \sigma(-0.5\beta)$. \\
    Error is $|\sigma(-0.5\beta) - 0| = \frac{1}{1 + e^{0.5\beta}}$.
    \item \textbf{Case (x,y) = (1,0):} True value is 0. Network output is $f(1,0) = \sigma(-0.5\beta)$. \\
    Error is $|\sigma(-0.5\beta) - 0| = \frac{1}{1 + e^{0.5\beta}}$.
    \item \textbf{Case (x,y) = (1,1):} True value is 1. Network output is $f(1,1) = \sigma(0.5\beta)$. \\
    Error is $|\sigma(0.5\beta) - 1| = |-(1-\sigma(0.5\beta))| = \sigma(-0.5\beta) = \frac{1}{1 + e^{0.5\beta}}$.
\end{itemize}
The term $\frac{1}{1 + e^{0.5\beta}}$ is strictly greater than $\frac{1}{1 + e^{1.5\beta}}$ for $\beta > 0$. Thus, the maximum error is bounded by $\frac{1}{1 + e^{0.5\beta}}$.

\noindent\textbf{2. Analysis for Negation ($\neg_\beta$):}
Let $g(x) = \neg_\beta(x) = \sigma(\beta(1-2x))$.
\begin{itemize}
    \item \textbf{Case x = 0:} True value is 1. Network output is $g(0) = \sigma(\beta)$. \\
    Error is $|\sigma(\beta) - 1| = \sigma(-\beta) = \frac{1}{1 + e^{\beta}}$.
    \item \textbf{Case x = 1:} True value is 0. Network output is $g(1) = \sigma(-\beta)$. \\
    Error is $|\sigma(-\beta) - 0| = \frac{1}{1 + e^{\beta}}$.
\end{itemize}
The maximum error is bounded by $\frac{1}{1 + e^{\beta}}$. The analysis for disjunction $\lor_\beta$ is analogous to conjunction and yields the same bound.
\end{proof}

\section{Proof of Theorem 4.1 (Structural Equivalence)}
\label{app:thm-structural-universality}

\begin{theorem}
Every structurally logical neural network $\mathcal{N}$ arises uniquely as a model of the corresponding Lawvere theory $\mathbb{L}_{\mathbb{T}}$ in $\textbf{Para}$, up to natural isomorphism (reparameterization). This establishes our categorical construction as the canonical generator for the class of structurally logical architectures.
\end{theorem}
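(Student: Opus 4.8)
The plan is to exhibit an equivalence between the collection of structurally logical networks over the signature $\Sigma$ whose weights satisfy the $\Phi$-constraints (Definition~\ref{def:structurally-logical}), taken up to behaviour-preserving reparameterization ($2$-morphisms in $\textbf{Para}$), and the collection $\mathrm{Mod}(\mathbb{L}_{\mathbb{T}},\textbf{Para})$ of product-preserving functors $\mathbb{L}_{\mathbb{T}}\to\textbf{Para}$, taken up to natural isomorphism. I will define an extraction map $\mathcal{X}$ from networks to models and a realization map $\mathcal{R}$ from models to networks, show that $\mathcal{R}$ always outputs a structurally logical network, and then verify that $\mathcal{X}$ and $\mathcal{R}$ are mutually inverse in the appropriate sense; surjectivity of $\mathcal{R}$ is exactly the asserted canonicity, since every structurally logical network then arises, up to reparameterization, from the construction of Theorem~\ref{thm:main-correspondence}.

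\textbf{Extraction.} Work with the free--quotient presentation $\mathbb{L}_{\mathbb{T}}=\mathbb{F}(\Sigma)/\!\sim$ supplied by the Logic-to-Lawvere algorithm. Given a structurally logical network $\mathcal{N}$, each layer implementing a connective $c\in\Sigma$ of arity $n$ furnishes a $1$-morphism $(P_c,f_c)\colon\mathbb{R}^n\to\mathbb{R}$; declare $\mathcal{X}(\mathcal{N})$ to send the generator $c$ there and the structural morphisms to the canonical parameter-free maps of Proposition~\ref{prop:lawvere-para-embedding}. By the universal property of the free product category $\mathbb{F}(\Sigma)$, this extends uniquely to a product-preserving functor $\widetilde{M}\colon\mathbb{F}(\Sigma)\to\textbf{Para}$. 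The defining constraint of structural logicality — that the shared weights satisfy $G(W)=0$ — says exactly that $\widetilde{M}$ sends each pair of $\sim$-related morphisms to parametric maps computing the same function, hence to isomorphic $1$-morphisms of $\textbf{Para}$; so $\widetilde{M}$ descends along the quotient to $M=\mathcal{X}(\mathcal{N})\colon\mathbb{L}_{\mathbb{T}}\to\textbf{Para}$, unique by the quotient's universal property.

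\textbf{Realization and mutual inverseness.} For $\mathcal{R}$, apply the Parametric-Model-to-Neural-Architecture algorithm: instantiate one layer per generator $M(c_{\mathbb{L}_{\mathbb{T}}})=(P_c,f_c)$ and wire layers into DAGs mirroring categorical composition. Functoriality and product preservation of $M$ force $M(\llbracket E_1\rrbracket)=M(\llbracket E_2\rrbracket)$ for every axiom $E_1\equiv E_2$, which is precisely the weight constraint $G(W)=0$; hence $\mathcal{R}(M)$ is structurally logical. Both round-trips only disturb the generator-indexed data: $\mathcal{R}(\mathcal{X}(\mathcal{N}))$ has the same layers and shared weights as $\mathcal{N}$, while $\mathcal{X}(\mathcal{R}(M))$ agrees with $M$ on generators and hence — a product-preserving functor out of $\mathbb{L}_{\mathbb{T}}$ being determined by its restriction to generators — is naturally isomorphic to $M$. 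The residual discrepancies in $\mathcal{R}\circ\mathcal{X}$ are confined to the bracketing and indexing of iterated parameter products $\prod_c P_c$ produced by composition, which are witnessed by invertible reparameterizations, i.e. $2$-morphisms in $\textbf{Para}$. The ``unique up to natural isomorphism'' of the statement then follows from the uniqueness clauses above.

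\textbf{Anticipated obstacle.} The crux is the strict-versus-weak mismatch. In the strict $2$-category $\textbf{Para}$, the composite $1$-morphism attached to an expression $E$ carries a specific iterated product of parameter spaces, so two $\sim$-related expressions induce $1$-morphisms that in general coincide \emph{only up to canonical $2$-isomorphism}, not on the nose; thus $\mathcal{X}(\mathcal{N})$ is a priori a pseudofunctor. Making the descent rigorous requires either assembling these witnessing reparameterizations into a coherent system — verifying the associativity and unit coherences they inherit from the cartesian structure of $\textbf{Para}$ — and invoking strictification to replace the pseudofunctor by an equivalent strict model, or else phrasing $\mathrm{Mod}(\mathbb{L}_{\mathbb{T}},\textbf{Para})$ pseudo-functorially throughout, which is what the ``up to natural isomorphism'' in the statement already accommodates. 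A minor loose end is presentation-independence: different finite axiomatizations of $\mathbb{T}$ yield equivalent Lawvere theories and, via Lemma~\ref{lem:consistency}, the same class of structurally logical architectures, so the correspondence is well defined at the level of theories rather than presentations.
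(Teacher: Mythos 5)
Your proposal follows essentially the same route as the paper's Appendix E: an extraction functor from networks to models and a realization functor back, with the constraint $G(W)=0$ doing the work of descending through the quotient $\mathbb{F}(\Sigma)/\!\sim$ and the round-trips agreeing up to reparameterization. Your flagged obstacle about strictness (the composite parameter spaces only agreeing up to canonical $2$-isomorphism, so that $\mathcal{X}(\mathcal{N})$ is a priori a pseudofunctor) is a genuine subtlety that the paper's proof passes over silently, so your version is, if anything, the more careful of the two.
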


\noindent\textbf{Prerequisite Definition.} We recall from the main text:
\begin{definition}[Structurally Logical Neural Network - SLNN]
A neural network $\mathcal{N}$ is \emph{structurally logical} with respect to a theory $\mathbb{T} = (\Sigma, \Phi)$ if its architecture is a Directed Acyclic Graph of layers implementing operations from $\Sigma$, and its parameters satisfy the algebraic constraints derived from the axioms $\Phi$.
\end{definition}

\noindent\textbf{Proof Strategy.} The proof establishes a categorical equivalence between the category of SLNNs (where morphisms are architecture-preserving maps) and the category of models of $\mathbb{L}_{\mathbb{T}}$ in \textbf{Para} (where morphisms are natural transformations). We achieve this by constructing functors in both directions, $F: \text{SLNN} \to \text{Mod}(\mathbb{L}_{\mathbb{T}}, \textbf{Para})$ and $G: \text{Mod}(\mathbb{L}_{\mathbb{T}}, \textbf{Para}) \to \text{SLNN}$, and showing they form an equivalence of categories.

\begin{proof}
\textbf{1. The Functor $F$: From Network to Model ($F: \text{SLNN} \to \text{Mod}(\mathbb{L}_{\mathbb{T}}, \textbf{Para})$)}
\begin{itemize}
    \item \textbf{Action on Objects (Networks):} Given a network $\mathcal{N} \in \text{SLNN}$, we construct a functor (model) $M_{\mathcal{N}}: \mathbb{L}_{\mathbb{T}} \to \textbf{Para}$.
    \begin{itemize}
        \item $M_{\mathcal{N}}$ maps an object $n$ in $\mathbb{L}_{\mathbb{T}}$ to the object $\mathbb{R}^n$ in \textbf{Para}.
        \item For each generating morphism $c:k \to 1$ in $\mathbb{L}_{\mathbb{T}}$, Definition 4.1 guarantees that $\mathcal{N}$ contains a corresponding layer. We define $M_{\mathcal{N}}(c)$ to be the parametric morphism $(P_c, f_c)$ represented by that layer.
    \end{itemize}
    \item \textbf{Verification of Model Axioms:} We must show that $M_{\mathcal{N}}$ is a valid model, i.e., a product-preserving functor that respects the relations of $\mathbb{L}_{\mathbb{T}}$.
    \begin{itemize}
        \item It is product-preserving by construction.
        \item By Definition 4.1, the parameters of $\mathcal{N}$ must satisfy the algebraic constraints from the axioms $\Phi$. These are precisely the constraints needed to ensure that for any axiom $E_1 \equiv E_2$, the composite parametric map for $E_1$ is identical to that for $E_2$. This directly implies that the functor $M_{\mathcal{N}}$ respects the defining equations of $\mathbb{L}_{\mathbb{T}}$. Thus, $M_{\mathcal{N}}$ is a valid model.
    \end{itemize}
    \item \textbf{Action on Morphisms (Network Maps):} A morphism between SLNNs is an architecture-preserving map that preserves the layer functions. $F$ maps this to a natural transformation between the corresponding models in \textbf{Para}.
\end{itemize}

\textbf{2. The Functor $G$: From Model to Network ($G: \text{Mod}(\mathbb{L}_{\mathbb{T}}, \textbf{Para}) \to \text{SLNN}$)}
\begin{itemize}
    \item \textbf{Action on Objects (Models):} Given a model $M: \mathbb{L}_{\mathbb{T}} \to \textbf{Para}$, we construct a network $\mathcal{N}_M$.
    \begin{itemize}
        \item For each generating morphism $c:k \to 1$ in $\mathbb{L}_{\mathbb{T}}$, we create a network layer that implements its image under the model, the parametric morphism $M(c)=(P_c, f_c)$.
        \item The network topology for any complex expression is determined by the composition of morphisms in $\mathbb{L}_{\mathbb{T}}$, resulting in a DAG of these layers.
    \end{itemize}
    \item \textbf{Verification of SLNN Property:} We must show that $\mathcal{N}_M$ is an SLNN.
    \begin{itemize}
        \item The network is a DAG of layers from $\Sigma$ by construction.
        \item The parameters of the network are precisely the parameters of the model $M$. Since $M$ is a model, it must respect the defining equations of $\mathbb{L}_{\mathbb{T}}$. This means its parameters automatically satisfy the constraints derived from the axioms $\Phi$.
        \item Therefore, $\mathcal{N}_M$ satisfies Definition 4.1 and is an SLNN.
    \end{itemize}
    \item \textbf{Action on Morphisms (Natural Transformations):} $G$ maps a natural transformation between models to a corresponding architecture-preserving map between the generated networks.
\end{itemize}

\textbf{3. Verification of Equivalence ($F \circ G \simeq \text{Id}$ and $G \circ F \simeq \text{Id}$)}
\begin{itemize}
    \item \textbf{$G(F(\mathcal{N})) \cong \mathcal{N}$:} Applying $F$ to a network $\mathcal{N}$ extracts its layer-wise parametric maps to form a model $M_\mathcal{N}$. Applying $G$ to this model reconstructs a network $\mathcal{N}'$ from these exact same maps with the same topology. The resulting network $\mathcal{N}'$ is behaviorally identical to $\mathcal{N}$ and is isomorphic to it in the category of SLNNs. Any difference would be a reparameterization, which is an isomorphism (a 2-morphism in \textbf{Para}).
    \item \textbf{$F(G(M)) \cong M$:} Applying $G$ to a model $M$ creates a network $\mathcal{N}_M$ whose layers are defined by the parametric maps $M(c)$. Applying $F$ to $\mathcal{N}_M$ extracts these very same maps to build a new model $M'$. Since both models $M$ and $M'$ are functors determined by their action on the generators of $\mathbb{L}_{\mathbb{T}}$, and their actions on these generators are identical by construction, the models are naturally isomorphic.
\end{itemize}
The functors $F$ and $G$ establish an equivalence of categories, proving that every SLNN arises from our construction (up to isomorphism) and that the construction is canonical.
\end{proof}

\section{Proof of Proposition 5.1 (Logical Theories as Monads)}
\label{app:prop-logical-monads}

\begin{proposition}
Every Lawvere theory $\mathbb{L}_{\mathbb{T}}$ generates a free algebra monad $M_{\mathbb{T}}$ on the category of sets, and models of the theory correspond bijectively to algebras for this monad.
\end{proposition}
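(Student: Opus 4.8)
The plan is to recognize this as the classical correspondence between Lawvere theories and finitary monads on $\textbf{Set}$ and to verify it by hand for the explicit formula $M_{\mathbb{T}}(X) = \coprod_{n \geq 0} \mathbb{L}_{\mathbb{T}}(n,1) \times X^n$, read as a coend over the category of finite arities so that the summands are glued along the reindexing maps (without this identification the right-hand side over-counts: an $n$-ary operation applied to a repeated variable must agree with the corresponding lower-arity operation). First I would make $M_{\mathbb{T}}$ into an endofunctor: a map $f : X \to Y$ acts on a class $[\omega, (x_1,\dots,x_n)]$ by $[\omega, (f x_1,\dots,f x_n)]$, the coend relation being exactly what makes this (and everything below) independent of the chosen representative. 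The unit $\eta_X : X \to M_{\mathbb{T}}(X)$ sends $x$ to $[\mathrm{id}_1, x]$ (the term ``$x$''). The multiplication $\mu_X : M_{\mathbb{T}}(M_{\mathbb{T}}(X)) \to M_{\mathbb{T}}(X)$ is substitution: given $\omega : n \to 1$ together with terms $t_i = [\omega_i : k_i \to 1,\ (x_{i,1},\dots,x_{i,k_i})]$, form the composite $\omega \circ (\omega_1 \times \dots \times \omega_n) : k_1 + \dots + k_n \to 1$ in $\mathbb{L}_{\mathbb{T}}$ and concatenate the $X$-tuples. This is the only place the finite-product and composition structure of the Lawvere theory is used.

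Next I would verify the three monad axioms. The unit laws are short: $\mu \circ M_{\mathbb{T}}\eta$ composes $\omega$ with a product of identities, $\mathrm{id}_1^{\times n} = \mathrm{id}_n$, and $\mu \circ \eta_{M_{\mathbb{T}}(X)}$ composes $\mathrm{id}_1$ with $\omega$; both collapse by the identity laws of $\mathbb{L}_{\mathbb{T}}$. Associativity, $\mu \circ M_{\mathbb{T}}\mu = \mu \circ \mu_{M_{\mathbb{T}}(X)}$, is the substantive case: applied to a triply nested term, both sides perform the nested substitution, and equality follows from associativity of composition in $\mathbb{L}_{\mathbb{T}}$ together with the bifunctoriality and coherence of its finite products, so that iterated products of operations and the concatenation of index sets agree up to the canonical reindexing isomorphisms the coend already quotients by. Naturality of $\eta$ and $\mu$ is immediate since $M_{\mathbb{T}}(f)$ only touches the $X$-components.

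Finally I would establish the algebra--model bijection. Given an Eilenberg--Moore algebra $a : M_{\mathbb{T}}(A) \to A$, define a product-preserving functor (model) $M$ by $M(n) = A^n$ and, on a generating operation $\omega : n \to 1$, by $(x_1,\dots,x_n) \mapsto a[\omega,(x_1,\dots,x_n)]$: the algebra unit law forces projections and identities to be interpreted correctly, the $\mu$-compatibility law forces composites (i.e.\ substituted terms) to be interpreted correctly, and the interpretation automatically respects the defining equations of $\mathbb{L}_{\mathbb{T}}$ because equal morphisms of $\mathbb{L}_{\mathbb{T}}$ index the same element of $M_{\mathbb{T}}(A)$. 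Conversely, from a model $M$ put $A = M(1)$ and set $a[\omega : n \to 1,\ \vec x] = M(\omega)(\vec x)$; this is well-defined by product-preservation and by $M$ respecting the theory's equations, and functoriality of $M$ is precisely the pair of algebra axioms. Checking that these two passages are mutually inverse, and that algebra homomorphisms correspond to natural transformations of models, is then an unwinding of definitions.

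I expect the main obstacle to be the associativity law together with the well-definedness of $\mu$ and $M_{\mathbb{T}}(f)$ on the coend: one must track how an $n$-ary operation composed with a tuple of $k_i$-ary operations, paired with the concatenated $(\sum_i k_i)$-tuple of variables, transforms under reindexing maps, and verify the interchange of composition with the finite-product functor of $\mathbb{L}_{\mathbb{T}}$. None of this is deep --- it is the standard verification that a Lawvere theory presents a finitary monad --- but it is where an informal argument can quietly go wrong, so it warrants explicit diagrams.
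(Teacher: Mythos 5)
Your proposal follows the same overall route as the paper's Appendix F --- construct the endofunctor from the formula $M_{\mathbb{T}}(X) = \coprod_{n \geq 0} \mathbb{L}_{\mathbb{T}}(n,1) \times X^n$, define $\eta$ as the identity-term inclusion and $\mu$ as substitution via composition in $\mathbb{L}_{\mathbb{T}}$, verify the three monad laws, and then pass between algebras and product-preserving functors --- but it is more careful on two points, one of which is substantive. First, you read the coproduct as a coend, quotienting by the reindexing maps; the paper takes the literal disjoint union of pairs $(\alpha, \vec{x})$. This identification is not cosmetic: without it the construction over-counts terms (already for the trivial theory the bare coproduct $\coprod_n \mathbb{L}_{\mathbb{T}}(n,1)\times X^n$ is far larger than $X$), and, more importantly, the unit and multiplication laws alone then fail to force an Eilenberg--Moore algebra to interpret a projection $\pi_i^n$ as an actual projection, so the claimed bijection with product-preserving functors would break. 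Your parenthetical remark that ``an $n$-ary operation applied to a repeated variable must agree with the corresponding lower-arity operation'' is exactly the missing relation, and your attention to well-definedness of $\mu$ and $M_{\mathbb{T}}(f)$ on equivalence classes is the price of that fix. Second, you actually prove the model--algebra correspondence in both directions (algebra $\mapsto$ functor via $M(n)=A^n$, $M(\omega)(\vec{x}) = a[\omega,\vec{x}]$, and back via $A = M(1)$), where the paper defers this entirely to a citation of Hyland--Power. The paper's version is shorter and leans on the standard literature for the one genuinely load-bearing claim; yours is self-contained and repairs the formula so that the stated bijection is literally true of the object constructed.
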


\noindent\textbf{Intuitive Reading:} This proposition states that the rules of a logical theory can be packaged into a "formula-building machine" (a monad). A set $X$ of variables goes in, and the set $M_{\mathbb{T}}(X)$ of all possible well-formed formulas one can build comes out. The monad structure provides a way to substitute variables (`unit`) and substitute formulas into other formulas (`multiplication`). 'Algebras' for this monad are then sets where these formulas can be consistently evaluated, which is exactly what a logical model does.

\begin{proof}
Let $\mathbb{L}_{\mathbb{T}}$ be a Lawvere theory. We construct the monad $M_{\mathbb{T}} = (M_{\mathbb{T}}, \eta, \mu)$ on the category \textbf{Set}.

\begin{enumerate}
    \item \textbf{The Functor $M_{\mathbb{T}}: \textbf{Set} \to \textbf{Set}$:}
    \begin{itemize}
        \item \textbf{On Objects:} For any set $X$ (of "variables"), we define $M_{\mathbb{T}}(X)$ as the set of all terms of the theory with variables from $X$. Formally, this is the disjoint union:
        \[ M_{\mathbb{T}}(X) := \coprod_{n \geq 0} \text{Hom}_{\mathbb{L}_{\mathbb{T}}}(n, 1) \times X^n \]
        An element of this set is a pair $(\alpha, (x_1, \dots, x_n))$, where $\alpha: n \to 1$ is an $n$-ary operation from the theory and $(x_1, \dots, x_n)$ is a tuple of variables. This represents the term $\alpha(x_1, \dots, x_n)$.
        \item \textbf{On Morphisms:} For a function $f: X \to Y$, we define $M_{\mathbb{T}}(f): M_{\mathbb{T}}(X) \to M_{\mathbb{T}}(Y)$ as the map that substitutes variables.
        $M_{\mathbb{T}}(f)(\alpha, (x_1, \dots, x_n)) := (\alpha, (f(x_1), \dots, f(x_n)))$.
    \end{itemize}

    \item \textbf{The Unit Natural Transformation $\eta: \text{Id} \Rightarrow M_{\mathbb{T}}$:}
    The component at a set $X$, $\eta_X: X \to M_{\mathbb{T}}(X)$, maps a variable $x \in X$ to the simplest term representing just that variable.
    \[ \eta_X(x) := (\text{id}_1, (x)) \]
    where $\text{id}_1: 1 \to 1$ is the identity morphism in $\mathbb{L}_{\mathbb{T}}$.

    \item \textbf{The Multiplication Natural Transformation $\mu: M_{\mathbb{T}} \circ M_{\mathbb{T}} \Rightarrow M_{\mathbb{T}}$:}
    The component at a set $X$, $\mu_X: M_{\mathbb{T}}(M_{\mathbb{T}}(X)) \to M_{\mathbb{T}}(X)$, performs substitution of complex terms into other terms, effectively "flattening" a term-of-terms. An element of $M_{\mathbb{T}}(M_{\mathbb{T}}(X))$ is a term $\alpha(t_1, \dots, t_n)$, where each $t_i$ is itself a term $(\beta_i, \mathbf{x}_i)$. The map $\mu_X$ composes these morphisms in the Lawvere theory to produce a single, flattened term. This corresponds to the composition of morphisms in $\mathbb{L}_{\mathbb{T}}$.
\end{enumerate}

\noindent\textbf{Verification of Monad Laws:}
\begin{itemize}
    \item \textbf{Right Unit Law ($\mu_X \circ M_{\mathbb{T}}(\eta_X) = \text{id}_{M_{\mathbb{T}}(X)}$):}
    This law corresponds to substituting variables into a term, which should yield the original term. $M_{\mathbb{T}}(\eta_X)$ takes a term $\alpha(x_1, \dots, x_n)$ and maps it to a term-of-terms where each variable $x_i$ is replaced by the term $(\text{id}_1, (x_i))$. Applying $\mu_X$ then flattens this back, resulting in the original term.
    \emph{Intuition:} \texttt{substitute($\alpha$(x, y), [x -> "x", y -> "y"]) = $\alpha$(x, y)}.

    \item \textbf{Left Unit Law ($\mu_X \circ \eta_{M_{\mathbb{T}}(X)} = \text{id}_{M_{\mathbb{T}}(X)}$):}
    This law corresponds to taking a term $t$ and substituting it into the identity operation. $\eta_{M_{\mathbb{T}}(X)}$ maps a term $t$ to the term-of-terms $(\text{id}_1, (t))$. Applying $\mu_X$ flattens this, yielding $t$ back.
    \emph{Intuition:} \texttt{substitute("x", [x -> $\alpha$(y,z)]) = $\alpha$(y,z)}.

    \item \textbf{Associativity Law ($\mu_X \circ M_{\mathbb{T}}(\mu_X) = \mu_X \circ \mu_{M_{\mathbb{T}}(X)}$):}
    This corresponds to the associativity of substitution. Whether one substitutes terms-of-terms into a larger term first and then flattens everything, or flattens the inner terms first before substituting, the result is the same. This property is guaranteed by the associativity of morphism composition in the Lawvere theory $\mathbb{L}_{\mathbb{T}}$.
\end{itemize}

Finally, the bijective correspondence between models of a Lawvere theory $\mathbb{L}_{\mathbb{T}}$ in \textbf{Set} and algebras for its free monad $M_{\mathbb{T}}$ is a standard, fundamental result in categorical universal algebra \cite{hyland2007category}. An $M_{\mathbb{T}}$-algebra is a set $A$ with a map $M_{\mathbb{T}}(A) \to A$ that is compatible with the monad structure; this map is precisely what defines the action of each logical operator on the set $A$, making it a logical model.
\end{proof}

\newpage

\end{document}